\newtheorem{definition}{Definition}
\newtheorem{theorem}{Theorem}
\newtheorem{lemma}{Lemma}
\newtheorem{observation}{Observation}
\title{\Large Improved approximation algorithms for two Euclidean $k$-Center variants%
\thanks{The results of this work first appeared in the MSc thesis of the third author, where the first two authors served as co-advisors. An independent recent work of Lee, Nagarajan and Wang~\cite{new_paper} obtained a $(1+\sqrt{3})$-approximation algorithm for Robust Euclidean $k$-Supplier by using similar, but not identical, ideas.
%This project received funding from the European Research Council (ERC) under the European Union's Horizon 2020 research and innovation programme (grant agreement No 817750) and from the Swiss National Science Foundation under grant 200021\_184622.
}}
\author{Haris Angelidakis\thanks{Gnosis Ltd. Email: harisangelidakis@gmail.com. Research was conducted while the author was at ETH Zurich.} \and Ivan Sergeev\thanks{ETH Zurich. Email: isergeev@ethz.ch.} \and Pontus Westermark\thanks{Nozomi Networks. Email: pontusw25@gmail.com. Research was conducted while the author was at ETH Zurich.}}
\date{}
\begin{document}

\maketitle

\begin{abstract}

The $k$-Center problem is one of the most popular clustering problems. After decades of work, the complexity of most of its variants on general metrics is now well understood. Surprisingly, this is not the case for a natural setting that often arises in practice, namely the Euclidean setting, in which the input points are points in $\mathbb{R}^d$, and the distance between them is the standard $\ell_2$ Euclidean distance. In this work, we study two Euclidean $k$-Center variants, the Matroid Center problem on the real line and the Robust Euclidean $k$-Supplier problem, and provide algorithms that improve upon the best approximation guarantees known for these problems. 

The Matroid Center problem on the real line is one of the rare instances of a 1-dimensional $k$-Center variant that is NP-hard, as shown by Chen, Li, Liang, and Wang (2016); most $k$-Center problems become easy when restricted to the real line. In fact, Chen et al.~showed that the problem is $(2-\varepsilon)$-hard to approximate. On the algorithmic side, only the $3$-approximation algorithm for Matroid Center on general metrics by Chen et al.~is known for tackling the problem. In this work, building on the classic threshold technique of Hochbaum and Shmoys (1986) and by exploiting the very special structure of real-line metrics, we improve upon the $3$-approximation factor and provide a simple $2.5$-approximation algorithm.

We then turn to the Robust $k$-Supplier problem (also known as $k$-Supplier with outliers), which is one of the most popular $k$-Center variants that have been studied in the literature. It is known that the problem admits a $3$-approximation on general metrics, which is tight even when there are no outliers, assuming $\mathtt{P} \neq \mathtt{NP}$. We focus on the Euclidean setting, for which the $3-\varepsilon$ hardness does not hold anymore. For the special case where there are no outliers, Nagarajan, Schieber and Shachnai (2020) gave a very elegant $(1+\sqrt{3})$-approximation algorithm for the Euclidean $k$-Supplier problem, thus overcoming the $3-\varepsilon$ barrier. However, their algorithm does not generalize to the robust setting. In this work, building on the ideas of Nagarajan et al.~and the general round-or-cut framework of Chakrabarty and Negahbani (2019) that gives tight $3$-approximation algorithms for many Robust $k$-Center variants on general metrics, we present a $(1 + \sqrt{3})$-approximation algorithm for the Robust Euclidean $k$-Supplier problem, thus improving upon the aforementioned $3$-approximation algorithm for Robust $k$-Supplier on general metrics and matching the best approximation factor known for the non-robust setting.

\end{abstract}

\section{Introduction}

The $k$-Center problem is one of the most popular and heavily studied clustering problems. In its most basic form, it is defined as follows. Given a finite metric space $(X,d)$ and an integer parameter $k \in \mathbb{N}_{>0}$, the goal is to compute a set of centers $S \subseteq X$ with $|S| \leq k$ so as to minimize $\max_{u \in X} \min_{s \in S} d(s,u)$. Equivalently, the goal is to find the minimum radius $r \geq 0$ and a set of centers $S \subseteq X$ with $|S| \leq k$ such that $B_X(S,r) = X$, where $B_X(S,r) \coloneqq \bigcup_{s \in S} B_X(s,r)$ and $B_X(s,r) \coloneqq \{u \in X: d(s,u) \leq r\}$.

The $k$-Center problem is now well understood, and there exist various $2$-approximation algorithms~\cite{DBLP:journals/tcs/Gonzalez85, DBLP:journals/jacm/HochbaumS86} for it. The problem is also known to be $(2-\varepsilon)$-hard to approximate~\cite{DBLP:journals/dam/HsuN79}, and thus, its complexity is settled, assuming that $\mathtt{P} \neq \mathtt{NP}$. Multiple variants of the problem have been studied during the last four decades. 

In this work, we study two variants of the $k$-Center problem that are formulated in Euclidean settings, namely the Matroid Center problem on the real line and the Robust Euclidean $k$-Supplier problem. We start with the definition of the Matroid Center problem on general metrics.

\begin{definition}[Matroid Center]
Let $(X,d)$ be a finite metric space and $\mathcal{M} = (X, \mathcal{I})$ be a matroid defined on $X$. The goal is to find the minimum radius $r \geq 0$ and an independent set $S \in \mathcal{I}$ such that $B_X(S,r) = X$.
\end{definition}

We clarify that from now on $X$ is allowed to be a multiset; in other words, $d$ can be a pseudometric, and in particular, for Euclidean settings we allow multiple distinct points at the same location. Thus, from now on, whenever we say that $(X,d)$ is a finite metric space, we allow for multiple distinct points being at distance $0$ from each other.

Chen et al.~\cite{DBLP:journals/algorithmica/ChenLLW16} studied the Matroid Center problem and gave an elegant $3$-approximation algorithm for it, which is also tight, as the $k$-Center problem with forbidden centers is a special case of Matroid Center, for which Chuzhoy et al.~\cite{DBLP:journals/jacm/ChuzhoyGHKKKN05} proved that it is $(3-\varepsilon)$-hard to approximate. 
%Although most other $k$-Center problems are easy on the real line, this changes when we increase the dimension of the Euclidean space. Indeed, most $k$-Center problems are $\mathtt{NP}$-hard for general Euclidean metrics. However, it is often the case that the hardness proved for general metrics does not apply anymore when we restrict ourselves to Euclidean metrics. More specifically, the $2-\varepsilon$ hardness for the standard $k$-Center problem does not apply anymore, and the only known hardness for the Euclidean setting is the $\sqrt{3}$-hardness proved by Feder and Greene~\cite{DBLP:conf/stoc/FederG88}. In particular, it is a major open question to beat the factor of $2$ for Euclidean $k$-Center.

A natural generalization of $k$-Center is the $k$-Supplier problem, introduced by Hochbaum and Shmoys~\cite{DBLP:journals/jacm/HochbaumS86}. In the $k$-Supplier problem, we are given a set of clients that need to be covered, and a set of facilities that can be opened, and the goal is to open at most $k$ facilities such that all clients are covered with the minimum possible radius. The $k$-Supplier problem is also well understood, and straightforward modifications of the aforementioned $2$-approximation algorithms for $k$-Center give a $3$-approximation for $k$-Supplier, which is again tight, assuming that $\mathtt{P} \neq \mathtt{NP}$~\cite{DBLP:journals/jacm/HochbaumS86}. However, there are many applications in which there is either no need to cluster all the points, or the input set of points is noisy, and thus one would need to discard some points before clustering the remaining ones. Many works have studied such variants, which are known as Robust $k$-Center/Supplier, or $k$-Center/Supplier with outliers. In the robust setting, we are given an additional integer parameter $p \in \mathbb{N}_{> 0}$, and the goal is to cluster at least $p$ points. In other words, we are allowed to discard some of the points, which are now treated as outliers. Formally, the Robust $k$-Supplier problem is defined as follows.

\begin{definition}[Robust $k$-Supplier]
Let $X$ be a finite set of points, called clients, and $F$ be a finite set of points, called facilities, such that $(X \cup F, d)$ is a finite metric space. Let $k \in \mathbb{N}_{> 0}$ and $p \in \mathbb{N}_{> 0}$. We denote such an instance as $(X \cup F, k, p)$. The goal is to find the minimum radius $r \geq 0$ and a set of facilities $S \subseteq F$ with $|S| \leq k$ such that $|B_X(S,r)| \geq p$.
\end{definition}

Charikar et al.~\cite{DBLP:conf/soda/CharikarKMN01} gave a very elegant greedy $3$-approximation algorithm for Robust $k$-Supplier, which is optimal, since the problem is a generalization of the $k$-Supplier problem, for which a $3-\varepsilon$ hardness was already known, as mentioned above~\cite{DBLP:journals/jacm/HochbaumS86}.% The algorithm of~\cite{DBLP:conf/soda/CharikarKMN01} is also a $3$-approximation for Robust $k$-Center. It took about 15 more years after the work of Charikar et al. in order to settle the approximability of Robust $k$-Center. Chakrabarty et al~\cite{DBLP:conf/icalp/ChakrabartyGK16} gave an optimal $2$-approximation algorithm for Robust $k$-Center, and then, Harris et al.~\cite{harris2017lottery} gave another $2$-approximation algorithm for the problem, which is arguably simpler and more natural. We note that both of these algorithms are LP-based. Finally, Chakrabarty and Negahbani~\cite{chakrabarty2018generalized} presented a very general round-or-cut framework that gives optimal $3$-approximation algorithms for many robust variants of $k$-Center, such as the Robust Knapsack Center problem, and offers a unified way of attacking robust $k$-Center variants.

A natural setting for all these problems that arises in practice is the Euclidean variant, where the input points are points of the Euclidean space $\mathbb{R}^d$ and the distance is the standard $\ell_2$ Euclidean norm. Regarding Matroid Center, Chen et al.~proved that even when restricted to real-line metrics (i.e., when $d = 1$), Matroid Center remains $(2-\varepsilon)$-hard to approximate. This is an interesting result, as it is one of the rare instances of a $k$-Center problem that remains $\mathtt{NP}$-hard even when restricted to the real line. Moreover, the Euclidean $k$-Center and $k$-Supplier problems are not as well understood, as the hardness results do not apply anymore. Feder and Greene~\cite{DBLP:conf/stoc/FederG88} showed that Euclidean $k$-Center is $(\sqrt{3}- \varepsilon)$-hard to approximate, and Euclidean $k$-Supplier is $(\sqrt{7}-\varepsilon)$-hard to approximate, assuming that $\mathtt{P} \neq \mathtt{NP}$. Both hardness results apply even to two-dimensional Euclidean metrics. However, even for two-dimensional Euclidean metrics, the best-known approximation factor for $k$-Center remains $2$. Nagarajan et al.~\cite{nagarajan2020euclidean} gave the first, and only, so far, improvement for Euclidean $k$-Supplier. In particular, they gave a very elegant $(1 + \sqrt{3})$-approximation for Euclidean $k$-Supplier by exploiting a simple property of the Euclidean space and reducing the problem to a minimum Edge Cover computation. Unfortunately, their algorithm is not known to extend to the more general robust setting.

%%%%%%%%%%%%%%%%%%%%%%%%
\paragraph{Our results.} In this work, we address Matroid Center on the real line and Robust Euclidean $k$-Supplier and obtain improved approximation algorithms for both problems. Formally, we prove the following theorems.

\begin{theorem}\label{thm:matroid-center}
There exists a $2.5$-approximation algorithm for Matroid Center on the real line.
\end{theorem}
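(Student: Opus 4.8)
The plan is to apply the threshold technique of Hochbaum and Shmoys: we guess the optimal radius $r^{*}$ by trying all $O(|X|^{2})$ candidate values (the pairwise distances among points of $X$, together with $0$), run the procedure described below for each guess $r$, and return the best independent set it produces; it then suffices to show that for the guess $r=r^{*}$ the procedure outputs some $S\in\mathcal I$ with $B_X(S,2.5\,r^{*})=X$, since the procedure only ever outputs a set $S$ together with a verified radius.

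Fix a guess $r$. Exploiting that on the real line every ball $B_X(x,\rho)$ is an interval, I would scan the points of $X$ from left to right and greedily build a sequence of \emph{representatives} $v_{1}<v_{2}<\dots<v_{t}$ together with pairwise disjoint \emph{candidate windows} $W_{1},\dots,W_{t}$, where $W_{j}\subseteq X$ is a subinterval of length at most $2r$ that is guaranteed to contain the center used by an optimal solution of radius $r$ to cover some not-yet-covered point represented by $v_{j}$. Concretely $v_{j}$ is the leftmost point of $X$ not covered by the groups created so far, $W_{j}$ is (a possibly trimmed or shifted version of) $B_X(v_{j},r)$, and the points of $X$ lying within $2.5\,r$ of every element of $W_{j}$ are declared "covered by group $j$"; the representatives are kept pairwise more than $2r$ apart. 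One then solves a single matroid intersection between $\mathcal M$ and the partition matroid induced by $W_{1},\dots,W_{t}$ to find a maximum common independent set $S$, and outputs $S$ (with radius $2.5\,r$) whenever $|S|=t$.

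For correctness at the guess $r=r^{*}$ one argues two things. First, the windows admit an independent transversal: for each $j$ pick the optimal center $c_{j}$ covering the uncovered point that $v_{j}$ represents; each $c_{j}$ lies in $W_{j}$ by construction and they form an independent set as a subset of the optimal solution, and they are pairwise distinct because the representatives lie pairwise more than $2r$ apart, so no single optimal ball (an interval of length $2r^{*}$) can contain two of them—hence matroid intersection returns a set of size $t$. Second, for \emph{every} independent choice $s_{j}\in W_{j}$ one has $\bigcup_{j}B_X(s_{j},2.5\,r)=X$: since balls are intervals, $\bigcap_{s\in W_{j}}B_X(s,2.5\,r)$ is an interval of radius at least $2.5\,r-|W_{j}|/2$ around $v_{j}$, and combining this with how far the previous group reaches and with the greedy choice of the next representative shows it contains all points assigned to group $j$.

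The main obstacle is reconciling these two requirements at the precise factor $2.5$. Making $W_{j}\supseteq B_X(v_{j},r)$ large enough to be certain it holds the true optimal center forces $\bigcap_{s\in W_{j}}B_X(s,2.5\,r)$ to shrink to radius only $\approx 1.5\,r$, while keeping groups small enough to stay inside that intersection would push consecutive representatives closer than $2r$, destroying the transversal argument. The one-dimensional structure breaks this tension: a window $B_X(v_{j},r)$ can extend substantially to the \emph{left} of the leftmost uncovered point $v_{j}$ only if the point of $X$ witnessing this already belongs to an earlier, hence nearby, group; in that case one can either take $W_{j}$ one-sided—so its intersection of $2.5\,r$-balls reaches $2.5\,r$ to the right of $v_{j}$, enough to meet the next representative—or charge $W_{j}$ to a different optimal center that is still provably more than $2r$ from all the others. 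Making this dichotomy quantitative is exactly what pins the factor at $2.5$ rather than $3$, and carrying out the corresponding case analysis for such "two-sided" windows is the delicate part of the argument.
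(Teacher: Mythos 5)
Your high-level skeleton matches the paper's: threshold over candidate radii, a left-to-right decomposition of the line into groups with separated representatives, a partition matroid whose classes are radius-$r$ windows around the representatives, matroid intersection with $\mathcal{M}$, and the $1.5r + r = 2.5r$ triangle-inequality accounting. However, the proposal has a genuine gap precisely where the paper's main technical content lies. Everything hinges on the existence (and polynomial-time computability) of representatives that are pairwise more than $2r$ apart while every point of their segment lies within $1.5r$ of some representative --- this is the paper's Lemma~\ref{lemma:real-line}, proved via a nontrivial directed-graph/path-existence argument with a careful case analysis. You explicitly defer exactly this step (``making this dichotomy quantitative \dots is the delicate part of the argument''), so no proof of the factor $2.5$ is actually given.

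Moreover, the concrete rule you do commit to --- taking $v_j$ to be the leftmost point not yet covered --- provably fails, as shown by the paper's own tightness example $X_r = \{0,\, 0.5r,\, 1.5r,\, 2r\}$ with an optimal center at position $r$. Your scan would take $v_1 = 0$, whose group (points within $1.5r$ of every element of $W_1 = B_X(0,r)$, an interval of length up to $2r$) cannot contain $2r$, forcing a second representative $v_2 = 2r$ with $|v_1 - v_2| = 2r \not> 2r$. Then the transversal argument collapses: a single feasible center at $r$ serves both representatives, so if the matroid has rank $1$ the intersection has size $1 < t = 2$ and your procedure wrongly reports that no radius-$r$ solution exists. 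The fix is exactly what the paper does: the representatives cannot be chosen by this naive greedy rule; they are selected (sometimes strictly to the right of the leftmost uncovered point, e.g.\ $S = \{0.5r\}$ above) through the reachability argument of Lemma~\ref{mat:forward} and the concluding case analysis on $x_n - x_1$. Your closing paragraph correctly diagnoses the tension between window width and representative separation, but diagnosing it is not resolving it; the ``trimming/shifting/charging'' dichotomy you gesture at is the unproven core of the theorem.
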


\begin{theorem}\label{thm:k-supplier}
There exists a $(1 + \sqrt{3})$-approximation algorithm for Robust Euclidean $k$-Supplier.
\end{theorem}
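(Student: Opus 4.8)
The plan is to combine the Hochbaum--Shmoys threshold technique with a round-or-cut argument over a coverage LP, confining the use of the Euclidean structure entirely to the rounding step, in the spirit of Nagarajan et al. First I would guess the optimal radius $r^{\ast}$; since there are only $O(|X|\cdot|F|)$ candidate values (the client--facility distances), it suffices, for a fixed guess $r$, to produce either a set $S\subseteq F$ with $|S|\le k$ and $|B_X(S,(1+\sqrt{3})\,r)|\ge p$, or a proof that no radius-$r$ solution exists, and then binary search over $r$. For a fixed $r$ I would work with the coverage polytope $\mathcal{P}_r=\{x\in[0,1]^{F}:\ \sum_{f\in F}x_f\le k,\ \mathrm{cov}_r(x)\ge p\}$, where $\mathrm{cov}_r(x)$ is the optimum of the fractional covering LP that maximizes $\sum_{c\in X}y_c$ over $y\in[0,1]^{X}$ with $y_c\le\sum_{f\in F:\,d(c,f)\le r}x_f$; this value is polynomial-time computable (e.g.\ by a flow computation), a genuine radius-$r$ solution has its indicator vector in $\mathcal{P}_r$, and whenever $x\notin\mathcal{P}_r$ either the constraint $\sum_f x_f\le k$ is violated or an LP-duality certificate of $\mathrm{cov}_r(x)<p$ yields a hyperplane valid for $\mathcal{P}_r$ that separates $x$.

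Next I would run the ellipsoid method on $\mathcal{P}_r$, replacing the separation oracle by a round-or-cut oracle in the style of Chakrabarty and Negahbani. On a query point $x\in\mathcal{P}_r$, I would process the clients in order of decreasing fractional coverage and greedily extract representatives $u_1,u_2,\dots$ so that any two representatives are at distance $>\sqrt{3}\,r$ while every client lies within $\sqrt{3}\,r$ of its representative; let $C_i$ be the cluster assigned to $u_i$ and $w_i:=|C_i|$. Here enters the Euclidean ingredient of Nagarajan et al.: three points that are pairwise at distance $>\sqrt{3}\,r$ cannot lie in a common ball of radius $r$ --- in the acute case the circumradius, which equals the enclosing radius, would force every angle above $60^{\circ}$, and in the right/obtuse case the longest side, being a diameter of the enclosing ball, would have to exceed $\sqrt{6}\,r>2r$ --- so every facility is within distance $r$ of at most two representatives. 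Consequently, the choice of which facilities to open reduces to an edge-cover problem on the graph $G$ having a vertex of weight $w_i$ for each representative $u_i$, an edge $\{u_i,u_j\}$ whenever some single facility is within $r$ of both, and a loop at $u_i$ whenever some facility is within $r$ of $u_i$.

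If $G$ admits a set of at most $k$ edges whose incident vertices have total weight at least $p$, then opening one witnessing facility per chosen edge covers each chosen representative within $r$ and hence every client of its (pairwise disjoint) cluster within $r+\sqrt{3}\,r=(1+\sqrt{3})\,r$, using at most $k$ facilities and covering at least $p$ clients, so the oracle outputs this solution. Otherwise the oracle must extract from the obstruction a valid inequality for $\mathcal{P}_r$ --- morally a flow-cover/knapsack-cover constraint on the facility variables of the balls $B_X(u_i,r)\cap F$ --- that is violated by $x$, add it, and let the ellipsoid method continue. If the ellipsoid method ever certifies that $\mathcal{P}_r$ together with all generated cuts is empty, then $r<r^{\ast}$ and the guess is raised; since each iteration adds a polynomially-describable inequality and only polynomially many iterations occur, the whole procedure is polynomial, and binary search over $r$ gives the $(1+\sqrt{3})$-approximation.

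The hard part will be the last two steps together: showing that the rounding can fail only when $x$ is genuinely outside $\mathcal{P}_r$ --- i.e.\ that every fractional point of the coverage polytope induces a set of at most $k$ edges of $G$ covering representatives of total weight at least $p$ --- and identifying the valid inequality that certifies the contrary. This is precisely where the partial-coverage bookkeeping of Chakrabarty and Negahbani (which on general metrics is carried out at clustering radius $2r$ and yields factor $3$) must be reconciled with the tighter Euclidean edge-cover idea of Nagarajan et al.\ (which in the non-robust setting exploits that the unknown integral optimum directly induces an edge cover of a maximal $\sqrt{3}\,r$-separated set of clients): one needs the greedy clustering at the smaller radius $\sqrt{3}\,r$ to retain, among the representatives coverable by at most $k$ edges, fractional coverage mass at least $p$, while keeping the double-counting of facility mass across the representatives' radius-$r$ balls under control (each facility lies in at most two such balls). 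A comparatively routine but necessary point is to confirm that the auxiliary problem of selecting at most $k$ edges of $G$ so as to maximize the weight of the incident vertices is solved \emph{exactly} in polynomial time via matching/edge-cover machinery, since round-or-cut requires an exact oracle.
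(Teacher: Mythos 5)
Your rounding side is essentially the paper's: threshold over the $O(|X|\cdot|F|)$ candidate radii, greedy clustering of clients at radius $\sqrt{3}\,r$ in non-increasing order of fractional coverage, the Euclidean observation that a facility is within $r$ of at most two representatives, and an \emph{exact} polynomial algorithm for the resulting ``at most $k$ edges, maximize covered vertex weight'' problem (the paper solves it via a reduction to maximum weight perfect matching). The gap is on the cut side, and it is not a technicality. First, the hope you state explicitly --- that ``the rounding can fail only when $x$ is genuinely outside $\mathcal{P}_r$'' --- is false: the fractional coverage polytope has an integrality gap at the cluster level. Take two far-apart triples of clients, each triple pairwise at distance slightly more than $\sqrt{3}r$, with a facility near each pairwise midpoint (so each facility is within $r$ of exactly the two clients of its pair), $k=3$, $p=6$: putting weight $1/2$ on all six facilities covers every client fractionally to extent $1$, so the point lies in your $\mathcal{P}_r$, yet no $3$ edges of the derived graph cover all $6$ representatives. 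Consequently your oracle, on a point of $\mathcal{P}_r$, can fail to round, and then ``a valid inequality for $\mathcal{P}_r$ that is violated by $x$'' cannot exist, since $x\in\mathcal{P}_r$ by assumption. What round-or-cut actually requires is an inequality valid for the \emph{integer hull} (convex hull of genuine radius-$r$ solutions) but violated by the current fractional point; you never exhibit one, and the ``knapsack/flow-cover'' placeholder is exactly the unproved step.

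Second, your formulation keeps only facility variables, and that choice obstructs the cut the paper uses. The paper's polytope carries client variables $x(u)$ as well, and when the Max $k$-Edge Cover optimum is below $p$ the separating inequality is $\sum_{c\in C_{\mathrm{rep}}}|X_c|\,x'(c)\le p-1$: it is valid for the integer hull because any integral solution $(Q,T)$ with $\sum_c |X_c|\chi^Q(c)\ge p$ would, via the edges $E_T$ and the inclusion $C_{\mathrm{rep}}\cap Q\subseteq C_{\mathrm{rep}}(E_T)$, yield $k$ edges covering weight at least $p$ (the integrality of the coefficients lets one pass from $>p-1$ to $\ge p$ at an integral vertex), and it is violated by the current point because the greedy ordering gives $\sum_c |X_c|\,x(c)\ge\sum_u x(u)\ge p$. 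Note this inequality lives in client-variable space; after projecting to facility variables it has no obvious linear counterpart, and a failed-rounding point need not even lie outside the \emph{projected} integer hull (a convex combination of two feasible integral solutions can produce a blended clustering for which neither solution's $k$ edges cover weight $p$). So to complete your argument you should either add the client variables back (i.e., run the ellipsoid method against the integer hull of the two-variable formulation, as the paper does) or supply a genuinely different separating inequality in facility space together with a proof of its validity for all integral solutions --- the latter is not routine.
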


To the best of our knowledge, the $3$-approximation algorithm of Chen et al.~\cite{DBLP:journals/algorithmica/ChenLLW16} for Matroid Center on general metrics is the only algorithm known even for Matroid Center on the real line. Thus, Theorem~\ref{thm:matroid-center} improves upon this factor of $3$ for the $1$-dimensional setting and narrows the gap between the best approximation factor and the hardness of $2-\varepsilon$. Moreover, Theorem~\ref{thm:k-supplier} gives a $(1 + \sqrt{3})$-approximation algorithm for Robust Euclidean $k$-Supplier, thus matching the approximation guarantee of the algorithm of Nagarajan et al.~\cite{nagarajan2020euclidean} that only applies to the non-robust setting, and nearly matching the known hardness of $\sqrt{7}-\varepsilon$~\cite{DBLP:conf/stoc/FederG88}. We note that prior to our work, there was no known algorithm for Robust Euclidean $k$-Supplier other than the $3$-approximation algorithm for Robust $k$-Supplier that works on general metrics~\cite{DBLP:conf/soda/CharikarKMN01}.

\paragraph{Note.} An independent recent work of Lee, Nagarajan and Wang~\cite{new_paper} obtained a $(1+\sqrt{3})$-approximation algorithm for Robust Euclidean $k$-Supplier. Our approach is similar to theirs, and at a high level essentially identical, but there are technical differences in both the polytope used for the round-or-cut algorithm as well as in the algorithm that solves the constraint version of Edge Cover that appears as an intermediate problem during the iterations of the ellipsoid algorithm.

%%%%%%%%%%%%%%%%%%%%%%%%
\paragraph{Overview of techniques.} At a high level, we use the classic threshold technique introduced by Hockbaum and Shmoys~\cite{DBLP:journals/jacm/HochbaumS86} that reduces the original task of designing an $\alpha$-approximation algorithm to the following task. Given a candidate radius $r$, the goal is to design a procedure that either returns a feasible solution of radius $\alpha r$, or certifies that there is no solution of radius $r$. Based on this, we proceed to design such procedures for Matroid Center on the real line and for Robust Euclidean $k$-Supplier.

Regarding Matroid Center, we closely follow the algorithm of Chen et al.~\cite{DBLP:journals/algorithmica/ChenLLW16}, and we exploit the very special structure of real-line metrics in order to improve upon one of the steps of the algorithm; this allows us to push the approximation factor down to $2.5$. More specifically, following the ideas of \cite{DBLP:journals/jacm/HochbaumS86}, Chen et al.~first greedily partition the space using balls of radius $2r$. This ensures that the centers of this partition are at pairwise distance strictly larger than $2r$. In case $r$ is a feasible radius, the centers of this partition are served by distinct centers in any feasible solution of radius $r$, and thus, Chen et al.~observe that they can now reduce the problem to a matroid intersection problem, which is polynomially solvable. Our contribution lies in observing that for any real-line metric, we can partition the space using balls of radius $1.5r$ such that the centers of this  partition are at pairwise distance strictly larger than $2r$.

As for Robust Euclidean $k$-Supplier, we closely follow the work of Chakrabarty and Negahbani~\cite{chakrabarty2018generalized}, who presented a very general round-or-cut framework that gives tight $3$-approximation algorithms for many robust variants of $k$-Center/Supplier. At a high level, Chakrabarty and Negahbani follow the framework of \cite{DBLP:journals/jacm/HochbaumS86}, and for a given candidate radius $r \geq 0$, they apply the ellipsoid method to the combinatorial polytope of radius $r$ (i.e., the convex hull of integer solutions of radius at most $r$), generate a preliminary clustering of the clients of radius $2r$ based on the candidate fractional point considered at the current iteration of the ellipsoid method, and reduce the problem of separation to a certain decision problem. Then, they show that if one can solve the resulting decision problem, then one can either obtain a feasible solution of radius $3r$ or find a separating hyperplane. It turns out that for many of the robust $k$-Center variants considered, this decision problem admits a polynomial-time algorithm. For that, a crucial property that Chakrabarty and Negahbani exploit is that, as in the Matroid Center case and the Hockbaum and Shmoys~\cite{DBLP:journals/jacm/HochbaumS86} framework, the centers of the candidate clusters obtained from the preliminary clustering are at pairwise distance larger than $2r$. We follow a similar approach, but we incorporate the ideas of Nagarajan et al.~\cite{nagarajan2020euclidean} in order to improve the approximation factor. More specifically, we perform the preliminary clustering with a smaller radius, namely $\sqrt{3} \cdot r$, as in~\cite{nagarajan2020euclidean}. The resulting decision problem to be solved now is more involved, as we do not have the property that the preliminary cluster centers that we have selected are served by distinct facilities of any feasible solution of radius $r$. To overcome this, we exploit the observations of~\cite{nagarajan2020euclidean} about Euclidean metrics in order to reduce the problem to a constraint Edge Cover problem (compared to just a standard minimum Edge Cover problem, as in~\cite{nagarajan2020euclidean}), which we call Max $k$-Edge Cover problem. Nevertheless, we observe that Max $k$-Edge Cover still admits a polynomial-time algorithm by reducing it to a Maximum Weight Perfect Matching computation. Putting everything together, we are able to solve the Max $k$-Edge Cover problem, and thus manage to either round or cut during each ellipsoid iteration.

%%%%%%%%%%%%%%%%%%%%%%%
\paragraph{Organization of material} The rest of this work is organized as follows. In Section~\ref{sec:hockbaum-shmoys}, we give a brief overview of the framework of Hockbaum and Shmoys~\cite{DBLP:journals/jacm/HochbaumS86}, as both of our algorithms follow it; readers familiar with it can skip this section and move directly to Section~\ref{sec:matroid-center}. In Section~\ref{sec:matroid-center}, we present the $2.5$-approximation algorithm for Matroid Center on the real line, while in Section~\ref{sec:k_supplier} we present the $(1+\sqrt{3})$-approximation for Robust Euclidean $k$-Supplier.

\section{The Hockbaum-Shmoys threshold framework}\label{sec:hockbaum-shmoys}

In this section, we briefly review the framework of Hockbaum and Shmoys~\cite{DBLP:journals/jacm/HochbaumS86} for designing approximation algorithms for $k$-Center problems, since both of our algorithms build on the ideas of~\cite{DBLP:journals/jacm/HochbaumS86}. The key observation of~\cite{DBLP:journals/jacm/HochbaumS86} is that in a $k$-Center instance defined on a finite metric space $(X,d)$, the optimal radius $\mathtt{OPT}$ is necessarily equal to the distance between two points of $X$. More formally, we must have $\mathtt{OPT} \in \mathcal{D} \coloneqq \{d(u,v): u,v \in X\}$. It is easy to see that $|\mathcal{D}| = O(|X|^2)$, which implies that any algorithm can start by ``guessing" the optimal radius $\mathtt{OPT}$. Thus, given the set $\mathcal{D}$, the framework proposed by~\cite{DBLP:journals/jacm/HochbaumS86} for obtaining an $\alpha$-approximation algorithm, for some $\alpha \geq 1$, is the following. Let $r \in \mathcal{D}$. Then, it is sufficient to design a polynomial-time procedure that does one of the following:
\begin{enumerate}
    \item it either returns a feasible solution of radius $\alpha r$, or
    \item it certifies that there is no feasible solution of radius $r$.
\end{enumerate}
Given such a procedure, we can go over all candidate radii in $\mathcal{D}$, and among all obtained solutions, return the solution with the smallest radius. By the guarantee of the above procedure, we know that the procedure will return a feasible solution when given $r = \mathtt{OPT}$, and so the final solution obtained will be of radius at most $\alpha \cdot \mathtt{OPT}$. Thus, we obtain an $\alpha$-approximation for the problem.

Using the above framework, Hockbaum and Shmoys obtained a simple $2$-approximation for $k$-Center and a $3$-approximation for $k$-Supplier, which we now briefly describe. For $k$-Center, given a candidate radius $r$, they partition the metric space by greedily picking points as centers and removing balls of radius $2r$ around them. Let $s_1, \ldots, s_{k'}$ be the centers obtained this way. By construction, the distance between any two centers among $s_1, \ldots, s_{k'}$ is strictly larger than $2r$. Thus, if $r$ is a feasible radius, then in any solution of radius $r$, each $s_i$ among the selected ones must be ``served" by a distinct center. This immediately implies that $k' \leq k$ and, moreover, the set of centers $s_1, \ldots, s_{k'}$ is already a feasible solution of radius $2r$.

Transferring this argument to the supplier setting requires one minor modification. The preliminary clustering is performed on the client side, and thus, the set of ``centers" $s_1, \ldots, s_{k'}$ are clients. In order to obtain a feasible solution, the final step of the algorithm selects a facility within distance $r$ from each point $s_1, \ldots, s_{k'}$. If $r$ is a feasible radius, then such facilities are guaranteed to exist and must be distinct, which again implies that $k' \leq k$. Due to this shifting of centers, the final approximation is $3$.

\section{A \texorpdfstring{$2.5$}{2.5}-approximation algorithm for Matroid Center on the real line}\label{sec:matroid-center}

In this section we prove Theorem~\ref{thm:matroid-center}, which is an improvement upon the current best known 3-ap\-prox\-i\-ma\-tion algorithm of Chen et al.~\cite{DBLP:journals/algorithmica/ChenLLW16} that applies to general metrics. Throughout this section, we use the following notation: $(X,d)$ denotes a finite metric space and $\mathcal{M} = (X, \mathcal{I})$ is a matroid defined on $X$.

At a high level, our algorithm closely follows the algorithm of Chen et al.~\cite{DBLP:journals/algorithmica/ChenLLW16}, who build on the framework of Hockbaum and Shmoys that was presented in Section~\ref{sec:hockbaum-shmoys}. More precisely, for any candidate radius $r\geq 0$, the algorithm in~\cite{DBLP:journals/algorithmica/ChenLLW16} selects a set $S \subseteq X$ that satisfies the following two properties:
\begin{enumerate}
    \item $d(s, s') > 2r$ for every $s \neq s' \in S$,
    \item $d(S, x) \coloneqq \min_{s \in S} d(s,x) \leq \alpha r$ for every $x \in X$,
\end{enumerate}
where $\alpha \geq 1$ is some parameter to be specified. If $r$ is a feasible radius, then property (1) implies that any two points in $S$ are served by distinct centers in any feasible solution of radius $r$. Thus, \cite{DBLP:journals/algorithmica/ChenLLW16} proceed to define another matroid $\mathcal{M}' = (X, \mathcal{I}')$, where $I \in \mathcal{I}'$ if and only if $|I \cap B_X(s,r)| \leq 1$ for every $s \in S$ and $|I \cap (X \setminus B_X(S,r))| \leq 0$, and so the resulting matroid $\mathcal{M}'$ is a partition matroid on $X$. It is easy to see now that if we compute a maximum cardinality independent set that belongs to both matroids $\mathcal{M}$ and $\mathcal{M}'$, we will obtain a feasible solution of radius $(1+\alpha)r$. Since there are polynomial-time algorithms for the matroid intersection problem, this is exactly what \cite{DBLP:journals/algorithmica/ChenLLW16} do and they obtain a $(1+\alpha)$-approximation algorithm for the problem. Finally, in order to specify $\alpha$, they mimic the algorithm of Hockbaum and Shmoys and use a greedy partitioning of the whole space with balls of radius $2r$. Thus, they use $\alpha = 2$ and obtain a $3$-approximation algorithm for general metrics, which, as noted in the introduction, is tight.

Our main contribution is that we show that on the real line, we can reduce the constant $\alpha$ from $2$ down to to $1.5$, which in turn improves the approximation factor from $3$ to $2.5$. More formally, we prove the following structural lemma, where we define $d(x,x') \coloneqq |x - x'|$ for every $x,x' \in \mathbb{R}$.

\begin{lemma}\label{lemma:real-line}
Let $r \geq 0$, and let $X = \{x_1, \ldots, x_n\} \subseteq \mathbb{R}$, such that $x_i \leq x_{i+1} \leq x_i + r$, for every $i \in [n-1]$. Then, there exists a set $S \subseteq X$ such that:
\begin{enumerate}
    \item $d(s, s') > 2r$ for every $s \neq s' \in S$,
    \item $d(S, x) \leq 1.5 r$ for every $x \in X$.
\end{enumerate}
Moreover, such a set $S$ can be computed in $\mathtt{poly}(|X|)$ time.
\end{lemma}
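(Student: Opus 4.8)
This lemma amounts to covering $X$ by radius-$\tfrac32 r$ balls centred at points of $X$ whose centres are pairwise at distance $>2r$, so I would run a Hochbaum--Shmoys--style left-to-right sweep tuned to the line. Two elementary consequences of the hypothesis $x_{i+1}\le x_i+r$ drive everything. First (i): every subinterval of $[x_1,x_n]$ of length strictly greater than $r$ contains a point of $X$, since otherwise two consecutive points of $X$ would straddle a gap of length $>r$. Second (ii): a ``suffix'' $\{x_i,\dots,x_n\}$ of diameter at most $2r$ is covered within radius $\tfrac32 r$ by a single point of $X$ --- take the largest $c\le x_i+\tfrac32 r$; by (i) its successor, if any, exceeds $x_i+\tfrac32 r$ and is $\le c+r$, so $c>x_i+\tfrac12 r$ and hence $x_n\le x_i+2r<c+\tfrac32 r$ --- while a suffix of diameter in $(2r,3r]$ is covered by $\{x_i,x_n\}$, whose two radius-$\tfrac32 r$ balls overlap (as $x_n-x_i\le 3r$) and whose separation exceeds $2r$.

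\textbf{The sweep.} Maintaining the last chosen centre $c$ and the already-covered prefix, I would repeatedly let $u$ be the leftmost uncovered point of $X$ and add a new centre $c'$ with $c'>c+2r$ and $|c'-u|\le\tfrac32 r$. The crucial point is that the window of admissible positions $(c+2r,\;u+\tfrac32 r]$ has length $(u-c)-\tfrac12 r>r$ (because $u>c+\tfrac32 r$), so by (i) it always contains a point of $X$; the largest such point covers $u$ and pushes the covered frontier forward by more than $2r$. This is exactly the manoeuvre that moves the next centre past the forbidden band $(c+r,\,c+2r]$ at a cost of only $\tfrac32 r$ in the radius instead of the usual $2r$.

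\textbf{The obstacle.} The sweep is delicate near $x_n$, and I expect this to be the main technical point. Call $[x_n-2r,\,x_n-\tfrac32 r)$ the \emph{bad band}: a centre placed there covers neither $x_n$ (it is more than $\tfrac32 r$ away) nor admits any later centre (which would have to exceed $x_n$), whereas the last centre is forced to lie in $[x_n-\tfrac32 r,\,x_n]$. A plain ``take the largest admissible $c'$'' rule can overshoot into the bad band and then get stuck --- small instances show one sometimes has to choose a more central centre --- so the sweep must be stopped in time. I would run it only until the uncovered suffix has diameter at most, say, $5r$; such a suffix needs at most three centres (any three of them span more than $4r$), so I can pick those final $O(1)$ centres by brute force over the $O(|X|^3)$ possibilities, and existence of a valid completion reduces once more to (i)--(ii) applied near $x_n$. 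The whole procedure runs in $\mathtt{poly}(|X|)$ time; property (1) is enforced at every step, and property (2) holds because each point of $X$ is covered at the moment it first becomes the leftmost uncovered point.
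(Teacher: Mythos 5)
Your preparatory facts (i)--(ii) are correct, and the sweep indeed makes progress while the window $(c+2r,\,u+\tfrac32 r]$ stays inside $[x_1,x_n]$. The gap is the final claim that, once you stop the sweep (uncovered suffix of diameter at most $5r$), ``existence of a valid completion reduces once more to (i)--(ii)'': facts (i)--(ii) describe how to cover a suffix in isolation, but they ignore the side constraint that every completion centre must lie strictly more than $2r$ to the right of the last committed centre, and that constraint can be binding. Concretely, take $r=1$ and $X=\{-1.5,\,-0.7,\,0,\,1,\,2,\,2.9,\,3.6,\,4.5\}$ (all consecutive gaps are at most $1$). The uncovered suffix initially has diameter $6>5$, so your sweep takes a greedy step and picks the largest point within $1.5$ of $x_1=-1.5$, namely $c=0$. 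The uncovered suffix is now $\{2,2.9,3.6,4.5\}$ of diameter $2.5\le 5$, so you stop and brute-force; but the only admissible centres are $\{2.9,3.6,4.5\}$ (the point $2$ is at distance exactly $2$ from $0$, violating the strict inequality in property (1)), no single one of them covers both $2$ and $4.5$ within $1.5$, and every pair of them is at distance at most $1.6\le 2$, so \emph{no} completion exists. Yet the lemma's conclusion does hold for this instance, e.g.\ via $S=\{-0.7,\,2,\,4.5\}$ --- which requires the \emph{first} centre to be non-maximal. Note that (ii) does cover the suffix $\{2,\dots,4.5\}$ by the pair $\{2,4.5\}$, but that pair is exactly what your committed centre forbids. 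So the failure is not confined to a ``bad band'' near $x_n$: a maximal greedy choice made while the uncovered suffix still has diameter larger than $5r$ (here $6r$) can already be inextendable, and translating this gadget to the end of a longer chain shows that no constant stopping threshold repairs a one-pass, keep-the-prefix greedy.

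This is precisely the difficulty the paper's proof is built to avoid: it never commits to a centre. Instead it forms a DAG on $X\cup\{t,\bar t\}$ whose edges encode all \emph{locally consistent} consecutive-centre pairs ($x_i\to x_j$ iff $x_j-x_i>2r$ and every intermediate point is within $1.5r$ of $x_i$ or $x_j$, with $t$, $\bar t$ attached to points within $1.5r$ of $x_1$, $x_n$), so that valid sets $S$ correspond to $t$--$\bar t$ paths; the nontrivial content is an existence proof for such a path, via a propagation lemma (reachability of the whole window $[x',x'+1.5r]$, not just of one greedily chosen point) together with a case analysis when $x_n-x_1\le 2r$. To salvage your approach you would need either a proof that some refined greedy choice is always extendable (which the example above shows cannot be ``largest admissible point'' plus a bounded look-ahead at the end) or a global search over all locally consistent choices --- i.e., essentially the paper's reachability/dynamic-programming argument.
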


Before proving the above lemma, we observe that if we require property (1), then the factor $\alpha = 1.5$ in property (2) is tight. An example demonstrating this is the following. For any given $r > 0$, let $X_r = \{0, 0.5r, 1.5r, 2r\}$. Note that the set $S = \{0.5r\}$ (or, alternatively, $S = \{1.5r\}$) satisfies property (1) of Lemma~\ref{lemma:real-line}, and also satisfies property (2), since we have $\max_{x' \in X_r} |0.5r - x'| \leq 1.5r$. Moreover, any set $S \subseteq X$ with $|S| \geq 2$ clearly violates property (1). This shows that the factor $1.5$ in property (2) of Lemma~\ref{lemma:real-line} is tight. Thus, in order to further improve the approximation factor, new ideas are needed.

Assuming Lemma~\ref{lemma:real-line}, whose proof is given in Section~\ref{sec:proof-real-line}, we can now prove the following theorem, which is the key theorem that will allow us to design our algorithm.

\begin{theorem}\label{thm:real-line-either-or}
Let $X$ be a finite subset of $\mathbb{R}$, and let $M = (X, \mathcal{I})$ be a matroid defined on $X$. There is an algorithm running in $\mathtt{poly}(|X|)$ time that, given a candidate radius $r \in \mathbb{R}_{\geq 0}$, does one of the following:
\begin{enumerate}
    \item it either returns a solution of radius $2.5r$, or
    \item it certifies that there is no solution of radius $r$.
\end{enumerate}
\end{theorem}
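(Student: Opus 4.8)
The plan is to instantiate the Chen et al.\ framework described above, but with the improved constant $\alpha = 1.5$ supplied by Lemma~\ref{lemma:real-line}. First I would handle a preprocessing step: given the candidate radius $r$, sort the points of $X$ and check whether there exist consecutive (in sorted order) points that are more than $r$ apart. If $X$ decomposes into "clusters" separated by gaps larger than $r$, then each such cluster must be handled independently, since no solution of radius $r$ can serve points in different clusters with a shared center; equivalently, one can restrict attention to a maximal run $x_1 \le x_2 \le \dots \le x_n$ with $x_{i+1} \le x_i + r$ for all $i$, which is exactly the hypothesis of Lemma~\ref{lemma:real-line}. (Alternatively, and more cleanly, one processes the whole line by greedily scanning left to right; the point is just that Lemma~\ref{lemma:real-line}'s hypothesis can be arranged.) Then I would invoke Lemma~\ref{lemma:real-line} to obtain, in $\mathtt{poly}(|X|)$ time, a set $S \subseteq X$ with $d(s,s') > 2r$ for all distinct $s,s' \in S$ and $d(S,x) \le 1.5r$ for all $x \in X$.

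Next I would build the partition matroid $\mathcal{M}' = (X, \mathcal{I}')$ exactly as in Chen et al.: declare $I \in \mathcal{I}'$ iff $|I \cap B_X(s,r)| \le 1$ for every $s \in S$ and $I \cap (X \setminus B_X(S,r)) = \emptyset$. Since property (2) of Lemma~\ref{lemma:real-line} gives $d(S,x) \le 1.5r$ for every $x$, we in fact have $X = B_X(S, 1.5r)$, so the "outside" part is about points not covered at radius $r$; the balls $B_X(s,r)$ for $s \in S$ may overlap, but that is harmless for the definition of a partition matroid if we assign each point to one such ball (breaking ties arbitrarily), and the ground set of $\mathcal{M}'$ is then the union of these blocks together with an unconstrained treatment of the remaining points forced to zero. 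I would then compute a maximum-cardinality common independent set $I^\star$ of $\mathcal{M}$ and $\mathcal{M}'$ using a standard polynomial-time matroid intersection algorithm. If $|I^\star| < |S|$, I claim there is no solution of radius $r$: indeed, any feasible $S^\star \in \mathcal{I}$ of radius $r$ must, by property (1), use a distinct center within distance $r$ of each $s \in S$, and collecting one such center per $s$ yields a set $T \subseteq S^\star$ with $|T| = |S|$, $T \in \mathcal{I}$ (independence is hereditary), and $|T \cap B_X(s,r)| \le 1$ for each $s$ (by the triangle inequality and $d(s,s') > 2r$), hence $T \in \mathcal{I}'$ after the tie-breaking assignment — contradicting $|I^\star| < |S|$. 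If instead $|I^\star| = |S|$, then $I^\star$ hits every block $B_X(s,r)$ exactly once, so every $s \in S$ has a chosen center within distance $r$, and therefore every $x \in X$ is within $d(S,x) + r \le 1.5r + r = 2.5r$ of $I^\star$; since $I^\star \in \mathcal{I}$, it is a feasible solution of radius $2.5r$, which I return.

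The main obstacle, and the only place where care is genuinely needed, is the argument that $|I^\star| < |S|$ \emph{certifies} infeasibility rather than merely failing to find a good solution. This hinges on the triangle-inequality fact that if $d(s,s') > 2r$ then no single point can be within distance $r$ of both $s$ and $s'$, so the "service" map from $S$ into any radius-$r$ solution is injective; combined with heredity of matroid independence, the preimage-free set of $|S|$ distinct centers is simultaneously independent in $\mathcal{M}$ and (after the fixed assignment of each point to one block) independent in $\mathcal{M}'$, which forces the matroid-intersection optimum to be at least $|S|$. Two routine details to nail down: (i) that the per-point block assignment used to define $\mathcal{M}'$ does not accidentally make this $|S|$-element witness dependent in $\mathcal{M}'$ — this is fine because each of the $|S|$ witness centers lies within distance $r$ of exactly one $s$ (again by $d(s,s')>2r$), so its block is determined and the witness hits each block at most once; and (ii) that the clustering/gap preprocessing is consistent with how $S$ and $\mathcal{M}'$ are assembled across clusters, which is immediate since distinct clusters share no radius-$r$ ball and contribute disjoint blocks. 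Everything else — sorting, the greedy of Lemma~\ref{lemma:real-line}, matroid intersection — runs in $\mathtt{poly}(|X|)$ time, giving the claimed bound.
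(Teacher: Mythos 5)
Your proposal is correct and follows essentially the same route as the paper's proof: split $X$ at gaps larger than $r$, apply Lemma~\ref{lemma:real-line} to each run, build the partition matroid with blocks $B_X(s,r)$ (plus the forced-empty remainder), run one global matroid intersection, and either round via the triangle inequality to radius $2.5r$ or certify infeasibility by exhibiting the $|S|$-element common independent set that any radius-$r$ solution would yield. The only cosmetic differences are your cardinality-based decision rule (equivalent to the paper's ``every block is hit'' test) and the unnecessary tie-breaking hedge, since property (1) together with the gap structure already makes the balls $B_X(s,r)$ pairwise disjoint.
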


\begin{proof}
Let $X = \{x_1, \ldots, x_n\} \subseteq \mathbb{R}$ such that $x_1 \leq x_2 \leq \ldots \leq x_n$. We first process the points from left to right as follows. Let $i \in [n-1]$ be the smallest index, if any, such that $x_{j+1} - x_j \leq r$, for every $j \in [i-1]$, and $x_{i + 1} - x_i > r$; if there is no such index, then we set $i = n$. Note that we have $i = 1$ if $x_2 - x_1 > r$. Let $X_1 = \{x_1, \ldots, x_i\}$. We remove the set $X_1$ from $X$, and repeat the same procedure, starting from the index $i+1$, until we exhaust all points of $X$. Thus, we end up with a partition $X_1, \ldots, X_t$ of $X$, where $t \in \mathbb{N}_{\geq 1}$. 

We now apply Lemma~\ref{lemma:real-line} to each set $X_i$, $i \in [t]$, since each such set satisfies the condition of the lemma with respect to the given radius $r$, and obtain sets $S_1, \ldots, S_t$, where $S_i \subseteq X_i$ for every $i \in [t]$. We then define a partition matroid $M' = (X, \mathcal{I}')$ by considering the following partition of $X$:
\begin{itemize}
    \item For each $i \in [t]$ and $s \in S_i$, each set $B_X(s_i,r)$ is a set of the partition.
    \item The set $X \setminus \bigcup_{i \in [t]} B_X(S_i,r)$ belongs to the partition.
\end{itemize}
The above sets indeed define a partition of $X$. To see this, note that for any $i \in [t]$ and $s \neq s' \in S_i$, we have $B_X(s,r) \cap B_X(s',r) = \emptyset$, by property (1) of Lemma~\ref{lemma:real-line}. Moreover, if $s \in S_i$ and $s' \in S_j$, for $1 \leq i < j \leq t$, then if $x \in X_i$ is the rightmost point in $X_i$ and $x' \in X_j$ is the leftmost point in $X_j$, we have $|x - x'| > r$. This shows that $B_X(s,r) \cap B_X(s',r) = \emptyset$. We conclude that the sets described above form a partition of $X$. We are now ready to define the independent sets in $\mathcal{I}'$. We have $I \in \mathcal{I}'$ if and only if:
\begin{itemize}
    \item $|I \cap B_X(s,r)| \leq 1$, for every $i \in [t]$ and $s \in S_i$, and
    \item $|I \cap (X \setminus \bigcup_{i \in [t]} B_X(S_i,r))| \leq 0$.
\end{itemize}

The last step of our algorithm, similar to \cite{DBLP:journals/algorithmica/ChenLLW16}, is to compute an independent set $\bar{I} \in \mathcal{M} \cap \mathcal{M}'$ of largest possible cardinality. This can be done in polynomial time, since the matroid intersection problem admits polynomial-time algorithms. Thus, we end up with a set $\bar{I}$. If $|\bar{I} \cap B_X(s,r)| = 1$ for every $i \in [t]$ and $s \in S_i$, then we return the set $\bar{I}$, which we claim is a $2.5$-approximation, while if $|\bar{I} \cap B_X(s,r)| = 0$ for some $i \in [t]$ and $s \in S_i$, the algorithm outputs that there is no solution of radius $r$.

Suppose that $|\bar{I} \cap B_X(s,r)| = 1$ for every $i \in [t]$ and $s \in S_i$. Observe that for any $i \in [t]$ and any point $x \in X_i$, we have $d(s, x) \leq 1.5r$ for some $s \in S_i$; this holds by property (2) of Lemma~\ref{lemma:real-line}. Thus, since $|\bar{I} \cap B_X(s,r)| = 1$, this means that there exists $\bar{s} \in \bar{I}$ such that $d(s, \bar{s}) \leq r$. By the triangle inequality, this now implies that $d(\bar{s},x) \leq 1.5r + r = 2.5r$. Thus, we conclude that in this case, we have $B_X(\bar{I}, 2.5r) = X$.

So, the only thing remaining to prove is that if $r$ is a feasible radius, then our algorithm will always succeed, i.e., we will always have $|\bar{I} \cap B_X(s,r)| = 1$ for every $i \in [t]$ and $s \in S_i$. To prove this, let $F \subseteq X$ be a feasible solution of radius $r$, which implies that $F \in \mathcal{I}$. We have $d(F,s) \leq r$ for every $i \in [t]$ and $s \in S_i$. In particular, this implies that $|F \cap B_X(s,r)| \geq 1$ for every $i \in [t]$ and $s \in S_i$. Since $B_X(s,r) \cap B_X(s',r) = \emptyset$ for any $s \neq s' \in \bigcup_{i \in [t]} S_i$, there exists $F' \subseteq F$ such that $|F' \cap B_X(s,r)| = 1$ for every $i \in [t]$ and $s \in S_i$; clearly, $F' \in \mathcal{I}$. Given the constraints defining the partition matroid $M'$, it is easy to see that $\bar{F} = F' \cap \left(\bigcup_{i \in [t]} B_X(S_i,r) \right)$ is an independent set in $\mathcal{M} \cap \mathcal{M}'$ of maximum cardinality. Thus, the matroid intersection algorithm used must necessarily return a set of the same cardinality, which implies that $\bar{I}$ will satisfy  $|\bar{I} \cap B_X(s,r)| = 1$ for every $i \in [t]$ and $s \in S_i$. We conclude that if $r$ is a feasible radius, then the algorithm will always return a solution of radius $2.5r$. This finishes the proof.
\end{proof}

We are now ready to prove Theorem~\ref{thm:matroid-center}.
\begin{proof}[Proof of Theorem~\ref{thm:matroid-center}]
Let $X$ be a finite subset of $\mathbb{R}$, and let $\mathcal{M} =  (X, \mathcal{I})$ be a matroid defined on $X$. Let $\mathcal{D} = \{|x-x'|: \; x,x' \in X\}$. Observe that $|\mathcal{D}| = O(|X|^2)$ and, moreover, $\mathcal{D}$ can be computed in $\mathtt{poly}(|X|)$ time. Let $\mathtt{OPT}$ be the optimal radius of the given instance; we have $\mathtt{OPT} \in \mathcal{D}$.

We now go over all distances in $\mathcal{D}$, and for each $r \in \mathcal{D}$, we use the polynomial-time procedure of Theorem~\ref{thm:real-line-either-or}, and among all computed solutions, we return the solution with the smallest radius. Since $\mathtt{OPT} \in \mathcal{D}$, it is clear that we will compute at least one solution, and the returned solution will have radius at most $2.5 \cdot \mathtt{OPT}$, because of the guarantees of Theorem~\ref{thm:real-line-either-or}. Moreover, since $|\mathcal{D}| = O(|X|^2)$, the whole procedure will run in $\mathtt{poly}(|X|)$ time. This concludes the proof.
\end{proof}

%%%%%%%%%%%%%%%%%%%%%%%%%%%%%%%%%%%%%%
\subsection{The proof of Lemma~\ref{lemma:real-line}}\label{sec:proof-real-line}

If $r = 0$, then the statement holds by setting $S = X$. So, from now on we assume that $r > 0$.  We begin by scaling each point in $X$ by $r^{-1}$; more precisely, we set $x_i' = \frac{x_i}{r}$ for every $i \in [n]$. We will prove that the statement holds for the set $X' = \{x_1', \ldots, x_n'\}$, where we have $x_i' \leq x_{i+1}' \leq x_i' + 1$ for every $i \in [n-1]$. In particular, we will show that there is a set $S' \subseteq X'$ that can be computed in $\mathtt{poly}(n)$ time such that:
\begin{enumerate}
    \item $d(s,s') > 2$ for all $s\neq s' \in S'$,
    \item $d(S',x') \leq 1.5$, for every $x' \in X'$.
\end{enumerate}
If this is the case, then it is easy to see that the set $S = \{rs': \; s' \in S'\} \subseteq X$ satisfies the desired properties with respect to the original input set $X$. Thus, without loss of generality, from now on we assume that $x_i \leq x_{i+1} \leq x_i + 1$ for every $i \in [n]$.

We now define a directed acyclic graph whose paths describe ways of selecting candidate sets with the desired properties. Let $G = (V, E), $ where $V = X \cup \{t, \overline t\}$ and the set of edges $E$ is defined as follows.
\begin{itemize}
\item There is an edge from $t$ to $x_i$, $i \in [n]$, if $x_i - x_1 \leq 1.5$.
\item There is an edge from $x_i$, $i \in [n]$, to $\overline t$, if $x_n - x_i \leq 1.5$.
\item There is an edge from $x_i$ to $x_j$, $i < j$, if $x_j - x_i > 2$ and $d(\{x_i, x_j\}, x_l) \leq 1.5$, for every $i < l < j$.
\end{itemize}

If there exists a directed path from $t$ to $\overline t$ in the graph $G$, then the points along such a path constitute a set $S$ that satisfies the two properties of Lemma~\ref{lemma:real-line}. We will prove the existence of such a path by repeated application of the following lemma.

\begin{lemma} \label{mat:forward}
Let $x\in X$ such that all points in $X \cap [x, x+1.5]$ are reachable from $t$ in $G$. Let $x' \in X$ be the smallest point that is larger than $x+2$, if any. Then, all points in $X \cap [x', x'+1.5]$ are reachable from $t$.
\end{lemma}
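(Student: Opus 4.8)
The plan is to prove the statement directly, with no induction: for each point $y \in X \cap [x', x'+1.5]$ I will exhibit a single vertex $w \in X \cap [x, x+1.5]$ together with an edge $w \to y$ of $G$. Since every point of $X \cap [x, x+1.5]$ is reachable from $t$ by hypothesis, this immediately gives that $y$ is reachable. Throughout I will use the following reformulation of the edge rule: for $w < y$ in $X$, the edge $w \to y$ exists iff $y - w > 2$ and no point of $X$ lies strictly inside the open interval $(w + 1.5,\; y - 1.5)$ --- and that interval is empty unless $y - w > 3$. The natural case split is according to whether $y \le x + 3.5$ or $y > x + 3.5$.

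In the first case I would take $w = z$, the largest point of $X$ with $z < y - 2$; it exists and satisfies $x \le z$ since $y - 2 \ge x' - 2 > x$, and $z < y - 2 \le x + 1.5$, so indeed $z \in X \cap [x, x+1.5]$ and $z$ is reachable. The edge $z \to y$ then comes essentially for free: by maximality of $z$, every point of $X$ exceeding $z$ is at least $y - 2$; applying this to the successor of $z$ (which exists because $y > z$ lies in $X$) and using the spacing bound $x_{i+1} \le x_i + 1$ gives $z \ge y - 3$, hence $z + 1.5 \ge y - 1.5$. Consequently any point $x_\ell$ of $X$ strictly between $z$ and $y$ is either within $1.5$ of $y$ (when $x_\ell \ge y - 1.5$) or within $1.5$ of $z$ (when $x_\ell < y - 1.5 \le z + 1.5$), which is exactly the edge condition.

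The second case, $y > x + 3.5$, is where the real work is, and the one nontrivial step will be the main obstacle. Here I would take $w$ to be the largest point of $X$ with $w \le x + 1.5$ (again $w \ge x$, so $w$ is reachable), and note that $y - w \ge y - (x+1.5) > 2$ immediately. It remains to show that $(w + 1.5,\; y - 1.5)$ contains no point of $X$. I would argue by a pincer: maximality of $w$ forces its successor (which exists since $y > w$) to exceed $x + 1.5$, and the spacing bound then upgrades this to $w > x + 0.5$, so that any $x_\ell > w + 1.5$ in fact satisfies $x_\ell > x + 2$; on the other hand $x_\ell < y - 1.5 \le x'$ forces $x_\ell < x'$, and since $x'$ is by definition the smallest point of $X$ strictly above $x + 2$, this forces $x_\ell \le x + 2$ --- contradicting $x_\ell > x + 2$. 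Hence the danger interval is empty, the edge $w \to y$ exists, and $y$ is reachable. Since $y$ was an arbitrary point of $X \cap [x', x'+1.5]$, this finishes the proof; the key insight, packaged in the pincer, is that the definition of $x'$ bounds candidate obstructing points from above by $x+2$ while the spacing plus the choice of $w$ bounds them from below by the same value.
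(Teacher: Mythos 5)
Your proof is correct and follows essentially the same route as the paper: for each $y \in X \cap [x', x'+1.5]$ you exhibit a single edge from a reachable point of $X \cap [x, x+1.5]$, using the unit-spacing bound (to place the chosen witness above $x+0.5$, resp.\ above $y-3$) and the minimality of $x'$ to rule out obstructing points in $(w+1.5,\,y-1.5)$. The paper packages this as a contradiction argument (treating $x'$ separately and taking the largest point of $X \cap [x,x+1.5]$ at distance more than $2$ from the target), but the witness edges and key inequalities coincide with yours, so the difference is purely organizational.
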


\begin{proof}
If there is no point in $X$ larger than $x + 2$, then there is nothing to prove. So, we assume that there is at least one such point, and $x'$ is the smallest such point. We first show that $x'$ is reachable from $t$. For that, we do some case analysis. First, assume that $|x - x'| > 3$. In such a case, there exists a point $\bar x \in V$ with $\bar x < x'$ such that $|\bar x - x'| \leq 1$ and consequently $|x - \bar x| > 2$. This contradicts that $x'$ is the smallest point in $X$ larger than $x+2$. Therefore, $|x - x'| \leq 3$, which implies that any point $w \in X \cap [x, x']$ is of distance at most $1.5$ to either $x$ or $x'$. Hence there is an edge from $x$ to $x'$.
%% If not there would exist a point $z \in X\cap (x, x')$ such that $d(\{x,x'\}, z) > 1.5$ so by the triangle inequality, $d(x, x') > 3$, from which it would follow that there exists a point $\bar x < x'$ such that $|\bar x - x'| \leq 1$ and consequently $|x - \bar x| > 2$, a contradiction. Note that this reasoning also shows $|x - x'| \leq 3$.

We now show that any $w' \in X \cap (x', x'+1.5]$ is reachable from $t$. Suppose that this is not the case, and let $w' \in X \cap (x', x'+1.5]$ be a point that is not reachable from $t$. We observe that we must have $|x - w'| > 3$, since otherwise the argument in the previous paragraph shows that there will be an edge from $x$ to $w'$. Let $w$ be the largest point in $X \cap[x, x+1.5]$ such that $|w - w'| > 2$; such a point always exists since $|x-w'| > 3$. As we have assumed that $w'$ is not reachable from $t$, and since $w$ is reachable from $t$ by assumption, there is no edge from $w$ to $w'$, and so there exists a point $u \in X$ between $w$ and $w'$ such that $d(\{w, w'\}, u) > 1.5$. This implies that $|w - w'| > 3$, since $|w - w'| = |w - u| + |u - w'|$. It is now easy to see that we must have $w \in X\cap (x+0.5, x+1.5]$; if not, then there would be another point $\bar{w} \in X \cap (w, x+ 1.5]$ satisfying $|\bar{w} - w'| > 2$, since consecutive points are at distance at most $1$ and $|w - w'| > 3$. Since $x < w < u$, we have $|x - u| = |x - w| + |w - u| > 0.5 + 1.5 =  2$. This implies that $x' \leq u$, and thus we have $w < x' \leq u < w'$. Using the fact that $w' \in X \cap (x', x'+1.5]$, we now get that $|u - w'| \leq 1.5$, which gives a contradiction. Putting everything together, we conclude that all points in $X \cap [x', x'+1.5]$ are reachable from $t$.
\end{proof}

We now prove the existence of a directed path from $t$ to $\overline t$ in $G$ by considering all potential distances $x_n - x_1$.
\begin{enumerate}
    \item $x_n - x_1 \leq 1.5$. Such a path exists by construction. 
    \item $1.5 < x_n - x_1 \leq 2$. Let $v \in X$ be the smallest point with $x_n -v \leq 1$. Then $v \neq x_n$ since $x_{i+1}-x_i \leq 1$ for all $i \in [n-1]$. If $v \leq x_1+1.5,$ then there is an edge from $t$ to $v$ and from $v$ to $\overline t$ in $G$, and thus a path from $t$ to $\overline t$. So, suppose that $v > x_1 + 1.5$. In this case, we must have $X \cap [x_1 + 1, x_1 + 1.5] = \emptyset$. Let $v' \in X$ be the largest point with $x_1 - v' \leq 1$. It follows that $v-v' \leq 1$, since there are no other points between $v$ and $v'$. Since $x_n - x_1 \leq 2$ and $v - x_1 > 1.5$, we have $|v - x_n| \leq 0.5$, and so $|v' - x_n| = |v' - v| + |v - x_n| \leq 1 + 0.5 = 1.5$. We conclude that there is an edge from $t$ to $v'$ and from $v'$ to $\overline t$, and thus a path from $t$ to $\overline t$.

%% Then $v_n - v' \leq 1.5$ (otherwise $v-v' > 1$, and the previous case would have sufficed), and so again $d(w, \{v_1, v_n\}) \leq 1.5$ for all $w \in V$ and thus such a path exists.

  \item $x_n - x_1 > 2$. Let $v \in X$ be the smallest point in $X \cap (x_1+2, x_n]$. By construction, all points in $X \cap [x_1, x_1 + 1.5]$ are reachable from $t$. Thus, we can apply Lemma \ref{mat:forward} and get that all points in $X \cap [v, v+1.5]$ are reachable from $t$. By repeatedly applying Lemma \ref{mat:forward} to the smallest point $v' \in X$ with $v' \in (v+2, x_n]$, we will eventually find a point $\overline v$ such that all points in $X \cap [\overline v, \overline v + 1.5]$ are reachable from $t$ and $x_n - \overline v \leq 2$. The previous case analysis will conclude the proof.
\end{enumerate}

\section{A \texorpdfstring{$(1 + \sqrt{3})$}{1+sqrt3}-approximation algorithm for Robust Euclidean \texorpdfstring{$k$}{k}-Supplier}\label{sec:k_supplier}

In this section, we prove Theorem~\ref{thm:k-supplier}. Throughout this section, we use the following notation. Let $X \subset \mathbb{R}^d$ denote a finite set of clients, let $F\subset \mathbb{R}^d$ denote a finite set of facilities, and let $d(u,v) = \|u - v\|_2$ for every $u,v \in X \cup F$. Let $k \in \mathbb{N}_{>0}$ be the maximum number of centers allowed to be opened, and let $p \in \mathbb{N}_{\geq 0}$ be the number of clients that need to be covered.

Similar to Matroid Center, our algorithm again builds on the framework of Hockbaum and Shmoys that was presented in Section~\ref{sec:hockbaum-shmoys}. In particular, the main building block is the following theorem, which we prove in Section~\ref{sec:proof-aux}.

\begin{theorem}\label{thm:k-supplier-either-or}
Let $(X, F, k, p)$ be an instance of Robust Euclidean $k$-Supplier. There is a polynomial-time algorithm that, given a candidate radius $r \in \mathbb{R}_{\geq 0}$, does one of the following:
\begin{enumerate}
\item it either returns a feasible solution of radius $(1+\sqrt 3)r$, or
\item it certifies that there is no solution of radius $r$.
\end{enumerate}
\end{theorem}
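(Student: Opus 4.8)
The plan is to run the ellipsoid method (in round-or-cut mode) on the combinatorial polytope $P_r$ defined as the convex hull of indicator vectors of feasible integral solutions of radius at most $r$; more precisely, $P_r$ lives in a space with a variable $y_f$ for each facility $f\in F$ and a variable $z_x$ for each client $x\in X$, and every integral point of $P_r$ corresponds to a choice $S\subseteq F$ with $|S|\le k$ together with the set $B_X(S,r)$ of clients it covers, with $\sum_x z_x\ge p$. The separation oracle is where all the work happens. Given a candidate fractional point $(y,z)$, we first check the "easy" linear constraints ($\sum_f y_f\le k$, $\sum_x z_x\ge p$, box constraints, and the covering-consistency inequalities $z_x\le\sum_{f:\,d(f,x)\le r}y_f$); if one is violated we return it as a cutting plane. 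Otherwise we must either produce an integral solution of radius $(1+\sqrt3)r$ or a valid separating hyperplane for $P_r$.

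The heart of the argument, following Chakrabarty--Negahbani combined with the Euclidean trick of Nagarajan et al., is the following. Using the fractional point, form a \emph{preliminary clustering} of the clients: greedily pick clients $s_1,s_2,\dots$ as cluster centers, but this time remove balls of radius $\sqrt3\cdot r$ (rather than $2r$) around each chosen center, so that the chosen centers are pairwise at distance strictly more than $\sqrt3\,r$, and every client lies within $\sqrt3\,r$ of some $s_i$. The key Euclidean observation from \cite{nagarajan2020euclidean} is that a single facility cannot be within distance $r$ of two cluster centers that are more than $\sqrt3\,r$ apart only fails to give a clean $2r$-bound — instead, if a facility $f$ covers two such centers, the two centers are within $2r$ of each other, hence at distance in $(\sqrt3\,r,2r]$, which is exactly the regime where pairs of facility-sharing centers can be matched up. So the natural combinatorial object is a graph $H$ on the preliminary centers $\{s_1,\dots,s_{k'}\}$, where we put an edge between $s_i$ and $s_j$ if some single facility lies within $r$ of both (and self-loops for centers coverable on their own), and the target is to pick at most $k$ facilities realizing an \emph{edge cover} of a large-weight subset of these centers; each chosen facility then covers its (one or two) endpoints, and by the triangle inequality every client assigned to a covered center $s_i$ is within $r+\sqrt3\,r=(1+\sqrt3)r$ of the opened facility. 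Because we are in the robust setting we cannot insist on covering \emph{all} centers: we must cover enough of them to account for $p$ clients while using at most $k$ facilities, which is why the relevant problem is the constrained variant the authors call Max $k$-Edge Cover, solvable in polynomial time via a reduction to Maximum Weight Perfect Matching.

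Concretely, the algorithm would: (i) compute the preliminary $\sqrt3\,r$-clustering and the graph $H$; (ii) assign to each center $s_i$ a weight equal to (a suitable integer approximation of) $\sum_{x\in C_i} z_x$, where $C_i$ is its cluster, or more carefully work with cluster sizes and the fractional coverage; (iii) solve Max $k$-Edge Cover on $H$, asking whether one can choose $\le k$ facilities whose covered centers account for total weight $\ge p$; (iv) if yes, open those facilities — this is a feasible integral solution of radius $(1+\sqrt3)r$, which we return; (v) if no, argue that \emph{no} integral point of $P_r$ can have its $z$-coordinates dominating the fractional $z$ on the clusters, and turn this infeasibility into a separating hyperplane (a valid inequality for $P_r$ violated by $(y,z)$) — typically of the form $\sum_{i}(\text{stuff over cluster }C_i)\le(\text{bound})$ that every integral $r$-solution satisfies because such a solution induces a feasible $k$-edge-cover of value $\ge p$ on $H$. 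Feeding this hyperplane back to the ellipsoid method, after polynomially many iterations we either have found a $(1+\sqrt3)r$ solution or have certified $P_r=\emptyset$, i.e.\ that no solution of radius $r$ exists.

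The main obstacle I expect is step (v): showing that whenever Max $k$-Edge Cover fails, there is a genuinely \emph{valid} inequality for $P_r$ that is violated, i.e.\ certifying that \emph{every} feasible radius-$r$ solution (not just "nice" ones) projects down to a feasible $k$-edge-cover instance on $H$ of value $\ge p$. The subtlety is that an optimal radius-$r$ solution may cover a client through a center other than the one it was assigned to in our preliminary clustering, or may cover clients that our weighting scheme charged to the wrong cluster; handling the fractional bookkeeping so that the combinatorial bound from $H$ is exactly what the polytope $P_r$ must satisfy — and making the two sides match with no slack — is the delicate part, and is presumably exactly where the authors' polytope differs technically from that of \cite{new_paper}. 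A secondary technical point is confirming that Max $k$-Edge Cover is polynomial-time solvable; this I would handle as a separate lemma, reducing it to a maximum-weight perfect matching computation on an auxiliary graph (doubling vertices, adding dummy vertices for uncovered centers and for the "budget" of unused facilities) in the standard way.
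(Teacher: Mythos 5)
Your overall architecture is the same as the paper's: round-or-cut on the integer hull of the natural relaxation, a greedy preliminary clustering with balls of radius $\sqrt{3}\,r$, the Euclidean observation that a facility can be within $r$ of at most two cluster representatives, a Max $k$-Edge Cover instance on the representatives solved via reduction to maximum-weight perfect matching, and rounding when its value is at least $p$. However, the step you yourself flag as the ``main obstacle'' --- producing a valid inequality for the integer hull that is violated by the current fractional point when the edge-cover value is below $p$ --- is exactly the crux of the theorem, and your proposal leaves it unresolved. The paper closes it with two specific choices you do not pin down. First, the clustering is greedy in \emph{non-increasing order of the fractional client values}: the next representative $c$ is a remaining client maximizing $x(c)$, and its cluster $X_c$ consists of the remaining clients within $\sqrt{3}\,r$. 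This guarantees $x(c)\ge x(v)$ for all $v\in X_c$, hence $\sum_{c\in C_{\textrm{rep}}}|X_c|\,x(c)\ \ge\ \sum_{u\in X}x(u)\ \ge\ p$, which is what makes the eventual cut actually separate the current point. Your description of the clustering omits this ordering, and without it the violation claim has no proof.

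Second, the vertex weights must be the \emph{integer} cluster sizes $w(c)=|X_c|$, not a ``suitable integer approximation of $\sum_{x\in C_i}z_x$.'' With integer coefficients, the cut can be stated as $\sum_{c\in C_{\textrm{rep}}}|X_c|\,x'(c)\le p-1$, and its validity for the integer hull is immediate by contradiction: if some integral point $(\chi^Q,\chi^T)$ had $\sum_c |X_c|\chi^Q(c)>p-1$, integrality of the coefficients gives $\ge p$; the at most $k$ facilities in $T$ induce at most $k$ edges of the graph covering every representative in $Q$, i.e.\ a Max $k$-Edge Cover solution of weight at least $p$, contradicting the assumption that its optimum is below $p$. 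This also dissolves your worry about clients being covered ``through a center other than the one it was assigned to'': the cut only references the representatives themselves (each weighted by its cluster size), so which clients an integral radius-$r$ solution happens to cover inside each cluster is irrelevant to validity. With fractional weights, by contrast, you lose both the clean rounding step (the paper rounds by noting the clusters are disjoint, so covering representatives of total weight $\ge p$ covers at least $p$ clients within $(1+\sqrt{3})r$) and the integrality-based tightening from ``$>p-1$'' to ``$\ge p$'' that makes the hyperplane both valid and of polynomial encoding length. So the gap is not in the scaffolding, which matches the paper, but in the precise clustering order, the weighting, and the explicit separating inequality with its validity proof.
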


Equipped with the above theorem, we are now ready to prove Theorem~\ref{thm:k-supplier}.

\begin{proof}[Proof of Theorem~\ref{thm:k-supplier}]
Let $(X,F,k,p)$ be an instance of Robust Euclidean $k$-Supplier, where $X \cup F \subset \mathbb{R}^d$, for some $d \in \mathbb{N}_{\geq 1}$ that is allowed to be part of the input. Let $\mathcal{D} = \{\|x-f\|_2: \; x \in X,f \in F\}$. Observe that $|\mathcal{D}| = O(|X| \cdot |F|)$ and, moreover, $\mathcal{D}$ can be computed in polynomial time. Let $\mathtt{OPT}$ be the optimal radius of the given instance; we have $\mathtt{OPT} \in \mathcal{D}$.

We now go over all distances in $\mathcal{D}$, and for each $r \in \mathcal{D}$, we use the polynomial-time procedure of Theorem~\ref{thm:k-supplier-either-or}, and among all computed solutions, we return the solution with the smallest radius. Since $\mathtt{OPT} \in \mathcal{D}$, it is clear that we will compute at least one solution, and the returned solution will have radius at most $(1+\sqrt{3}) \mathtt{OPT}$, because of the guarantees of Theorem~\ref{thm:k-supplier-either-or}. Moreover, since $|\mathcal{D}| = O(|X| \cdot |F|)$, the whole procedure will run in polynomial time. This concludes the proof.
\end{proof}

%We note that we can speed up the algorithm of Theorem~\ref{thm:main} by using another standard trick, binary search. We can do binary search over the sorted list of candidate radii in order to identify the smallest radius $r$ for which the algorithm of Theorem~\ref{robust:maint} returns a solution.

%%%%%%%%%%%%%%%%%%%%%%%%%%%%%%%%%%%%%%%%%%%%%%%%%%%
\subsection{Proof of Theorem~\ref{thm:k-supplier-either-or}}\label{sec:proof-aux}

We now discuss the algorithm and proof of Theorem~\ref{thm:k-supplier-either-or}. At a high level, our algorithm employs the round-or-cut framework, first introduced by Carr et al.~\cite{carr1999strengthening}, and later used in Robust $k$-Center variants by Chakrabarty and Negahbani \cite{chakrabarty2018generalized}. For a given radius $r \geq 0$, the round-or-cut method uses the ellipsoid method on the combinatorial polytope (i.e., the convex hull of integer solutions of radius at most $r$) and in each iteration we either round a candidate fractional point (provided by the ellipsoid algorithm) into a solution of radius $(1+\sqrt 3)r$, or separate this point from the combinatorial polytope. In particular, for each such candidate point, we follow the strategy of~\cite{chakrabarty2018generalized} that first clusters all the clients in a certain way; we note that this clustering first appeared in Harris et al.~\cite{harris2017lottery}. Given the resulting clustering, we then construct what we call a Max $k$-Edge Cover instance, inspired by the techniques of Nagarajan et al.~\cite{nagarajan2020euclidean}, and show that this auxiliary problem can be solved in polynomial time. If the candidate point provided by the ellipsoid algorithm is in the combinatorial polytope, we show that the Max $k$-Edge Cover problem admits a solution of weight at least $p$, that directly corresponds to a solution of radius $(1+\sqrt 3)r$ solution. On the other hand, in the case the optimal value of the Max $k$-Edge Cover problem is smaller than $p$, we show how to construct a hyperplane separating the current point from the polytope. The ellipsoid method ensures that in polynomial time, we can either conclude that the combinatorial polytope is empty, and thus $r$ is not a feasible radius, or construct a solution of radius $(1+\sqrt 3)r$. Since the subroutine for solving the Max $k$-Edge Cover problem is polynomial-time, this implies that the whole algorithm also runs in polynomial time.

We now explain the details of the algorithm. Let $(X \cup F, k, p)$ be an instance of Robust Euclidean $k$-Supplier. Let $r \geq 0$ be the given candidate radius. We define the following polytope, which corresponds to the canonical relaxation of our problem, where the $x$-vector indicates whether a client participates in the clustering, and the $y$-vector indicates which facilities are selected as centers.
\begin{equation*}\label{eq:defP}
\mathcal{P}(r) = \left\{ (x,y)\in [0,1]^X \times [0,1]^F \: \middle| \:
\renewcommand\arraystretch{1.2}
\begin{array}{>{\displaystyle}rcl@{\quad}l}
    \sum_{s\in F}y(s)                &\leq &k & \\
    \sum_{s\in B_F(u,r)} y(s)  &\geq& x(u) & \forall u\in X \\ 
    \sum_{u\in X} x(u)             &\geq &p &
\end{array}
\right\}.
\end{equation*}

It is easy to verify that $\mathcal{P}(r)$ is indeed a relaxation of our problem. In particular, if $r$ is a feasible radius, then $\mathcal{P}(r)$ is non-empty. We also define $\mathcal{P}_I(r) \coloneqq \mathrm{conv}\left(\mathcal{P}(r) \cap (\{0,1\}^X \times \{0,1\}^F )\right)$. The polytope $\mathcal{P}_I(r)$ is the combinatorial polytope corresponding to the radius $r$, i.e., the convex hull of integer solutions of radius at most $r$.

\begin{observation}\label{obs:subset}
For every $r \geq 0$, $\mathcal{P}_I(r) \subseteq \mathcal{P}(r)$.
\end{observation}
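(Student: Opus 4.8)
The statement is immediate from the definition of $\mathcal{P}_I(r)$ together with the convexity of $\mathcal{P}(r)$, so the plan is just to spell out these two facts. First I would observe that $\mathcal{P}(r)$ is a convex set: it is defined as the solution set of a finite system of linear inequalities (the bound $\sum_{s \in F} y(s) \le k$, the covering constraints $\sum_{s \in B_F(u,r)} y(s) \ge x(u)$ for each $u \in X$, the demand constraint $\sum_{u \in X} x(u) \ge p$, and the box constraints $0 \le x(u), y(s) \le 1$), hence a polytope and in particular convex.

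Next I would note the trivial inclusion $\mathcal{P}(r) \cap \bigl(\{0,1\}^X \times \{0,1\}^F\bigr) \subseteq \mathcal{P}(r)$, since the left-hand side is by construction an intersection with $\mathcal{P}(r)$. The final step is the general principle that if $A \subseteq C$ and $C$ is convex, then $\mathrm{conv}(A) \subseteq C$: indeed $C$ is a convex set containing $A$, and $\mathrm{conv}(A)$ is the smallest such set. Applying this with $A = \mathcal{P}(r) \cap \bigl(\{0,1\}^X \times \{0,1\}^F\bigr)$ and $C = \mathcal{P}(r)$ yields $\mathcal{P}_I(r) = \mathrm{conv}(A) \subseteq \mathcal{P}(r)$, as desired.

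There is essentially no obstacle here; the only thing to be slightly careful about is to invoke convexity of $\mathcal{P}(r)$ explicitly rather than treating the inclusion as self-evident, since $\mathcal{P}_I(r)$ is defined via a convex hull and the containment genuinely relies on $\mathcal{P}(r)$ being closed under convex combinations. I would keep the write-up to two or three sentences.
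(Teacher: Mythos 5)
Your proof is correct and is exactly the argument the paper has in mind: the paper states this as an unproven observation precisely because $\mathcal{P}_I(r)$ is by definition the convex hull of a subset of the convex polytope $\mathcal{P}(r)$, so the inclusion follows from minimality of the convex hull. Your two-to-three-sentence write-up spelling out convexity of $\mathcal{P}(r)$ is fine and adds nothing beyond what the paper implicitly relies on.
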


We apply the ellipsoid method to the polytope $\mathcal{P}_I(r)$ to check whether it is empty or not. Of course, this is an $\mathtt{NP}$-hard task, so we are not always able to separate. Nevertheless, we show that in case where we cannot separate, then we will be able to produce a $(1 + \sqrt{3})r$ solution. More specifically, the ellipsoid method will provide a sequence of candidate points $(x,y) \in [0,1]^{X}\times [0,1]^{F}$, and for each point $(x,y)$ considered, we are able to do the following:
\begin{enumerate}
\item we either find a solution of radius at most $(1 + \sqrt{3})r$ and stop the algorithm, or
\item we find a hyperplane separating $(x,y)$ from $\mathcal{P}_I(r)$.
\end{enumerate}
We will ensure that the  encoding length of the generated hyperplanes in the second case is polynomially bounded, and thus, if we are not able to obtain a solution of radius at most $(1 + \sqrt{3})r$, i.e., if the first case never happens, then in polynomially many steps (see Theorem 6.4.9 in \cite{grotschel2012geometric}) the ellipsoid method will certify that $\mathcal{P}_I(r) = \emptyset$. If $r$ is a feasible radius for the given instance, we have that $\mathcal{P}_I(r)  \neq \emptyset$, and so the above procedure will necessarily recover a solution of radius $(1+\sqrt 3)r$ in a polynomial number of steps.

We now proceed to formalize the above discussion. Let $(x,y) \in \mathbb{R}^X \times \mathbb{R}^F$ be a candidate point considered by the ellipsoid method. We first check if $(x,y) \in \mathcal{P}(r)$. If not, then it means that $(x,y)$ violates one of the inequalities in the inequality description of $\mathcal{P}(r)$. Thus, Observation~\ref{obs:subset} implies that we can use the violated inequality as a separating hyperplane. Clearly, its encoding length is polynomial.

So, from now on we assume that $(x,y) \in \mathcal{P}(r)$. We follow the strategy of Chakrabarty and Negahbani~\cite{chakrabarty2018generalized}. We use their $x$-based clustering, which first appeared in a prior work of Harris et al.~\cite{harris2017lottery}, in order to obtain a set of cluster representatives $C_{\textrm{rep}} \subseteq X$ and corresponding clusters $\{X_c\}_{c \in C_{\text{rep}}}$, where $X_c \subseteq X$ for every $c \in C_{\text{rep}}$. As in~\cite{harris2017lottery,chakrabarty2018generalized}, in order to obtain this clustering, we sort the clients in non-increasing order of their $x$-value and then greedily partition the space $X$ based on these $x$-values. The main difference compared to the clustering of~\cite{harris2017lottery,chakrabarty2018generalized} is that our algorithm considers balls of radius $\sqrt{3} \cdot r$ (similar in spirit to the clustering of Nagarajan et al.~\cite{nagarajan2020euclidean}), as opposed to the balls of radius $2r$ that~\cite{harris2017lottery,chakrabarty2018generalized} use. This results in a different decision problem that we will have to solve based on the clustering obtained, compared to the one that~\cite{chakrabarty2018generalized} deal with. A formal description of the algorithm is given in Algorithm \ref{alg:RFiltering2}.

\begin{algorithm}
  \caption{The $x$-based greedy clustering algorithm.}
  \label{alg:RFiltering2}

Set $Z = \{u \in X: x(u) > 0\}$.\\

Set $C_{\textrm{rep}} = \emptyset$.\\

\While{$Z \neq \emptyset$} {
    Let $c = \arg\max_{u \in Z} x(u)$ (breaking ties arbitrarily).
    
    Set $C_{\textrm{rep}} = C_{\textrm{rep}} \cup \{c\}$.
    
    Let $X_c = B_Z(c, \sqrt{3} \cdot r)$.
    
    Set $Z = Z \setminus X_c$.
}

Return $C_{\textrm{rep}}$ and $\{X_c\}_{c \in C_{\textrm{rep}}}$
\end{algorithm}

It is easy to see that the sets $\{X_c\}_{c \in C_{\textrm{rep}}}$ form a subpartition of $X$, and we also have $d(c, c') > \sqrt{3} \cdot r$ for every $c \neq c' \in C_{\textrm{rep}}$. Moreover, due to the greedy choices of the algorithm, for every $c \in C_{\textrm{rep}}$, we have $x(c) \geq x(v)$ for every $v \in X_c$. Since $(x,y) \in \mathcal{P}(r)$, this implies that
\begin{equation}\label{eq:covering}
    p \leq \sum_{u \in X} x(u) = \sum_{u \in X:\; x(u) > 0} x(u) = \sum_{c \in C_{\textrm{rep}}} \sum_{u \in X_c} x(u) \leq \sum_{c \in C_{\textrm{rep}}} |X_c| \cdot x(c).
\end{equation}

Based on the sets $\{X_c\}_{c \in C_{\textrm{rep}}}$, we now use a reduction very similar to the one in Nagarajan et al.~\cite{nagarajan2020euclidean} in order to reduce our problem to a maximization version of the Edge Cover problem (formally defined later on). For that, we first state a simple lemma of~\cite{nagarajan2020euclidean} that exploits the Euclidean metric, and justifies why we can reduce our problem to an Edge Cover problem.

\begin{lemma}[Lemma 1 in~\cite{nagarajan2020euclidean}]
\label{makeagraph}
For any facility $f \in F$, the number of clients in $C_{\text{rep}}$ that are within distance $r$ from $f$ is at most 2.
\end{lemma}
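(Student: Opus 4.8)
The plan is to argue by contradiction, exploiting the one property of the clustering that matters here: by construction of Algorithm~\ref{alg:RFiltering2}, any two distinct representatives $c, c' \in C_{\textrm{rep}}$ satisfy $d(c,c') > \sqrt{3} \cdot r$. Suppose, for contradiction, that some facility $f \in F$ has three distinct representatives $c_1, c_2, c_3 \in C_{\textrm{rep}}$ with $d(c_i, f) \leq r$ for each $i \in \{1,2,3\}$.

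First I would translate coordinates so that $f$ is the origin, setting $v_i \coloneqq c_i - f \in \mathbb{R}^d$. The hypothesis becomes $\|v_i\|_2 \leq r$ for each $i$, while the separation property of $C_{\textrm{rep}}$ gives $\|v_i - v_j\|_2 > \sqrt{3}\cdot r$ for all $i \neq j$. Next I would convert the pairwise-distance bounds into inner-product bounds: expanding $\|v_i - v_j\|_2^2 = \|v_i\|_2^2 + \|v_j\|_2^2 - 2\langle v_i, v_j\rangle$ and using $\|v_i\|_2^2, \|v_j\|_2^2 \leq r^2$ together with $\|v_i - v_j\|_2^2 > 3r^2$ yields $\langle v_i, v_j\rangle < -\tfrac{1}{2}r^2$ for every pair $i \neq j$.

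The final step is to sum these up via the nonnegativity of a squared norm: $0 \leq \|v_1 + v_2 + v_3\|_2^2 = \sum_{i=1}^{3}\|v_i\|_2^2 + 2\sum_{1 \leq i < j \leq 3}\langle v_i, v_j\rangle < 3r^2 + 2 \cdot 3 \cdot \bigl(-\tfrac{1}{2}r^2\bigr) = 0$, a contradiction. (When $r = 0$ the same inequalities force $v_1 = v_2 = v_3 = 0$, hence $\|v_1 - v_2\|_2 = 0 \not> 0$, so the claim holds trivially in that degenerate case.) Therefore at most two representatives of $C_{\textrm{rep}}$ lie within distance $r$ of any fixed facility $f$.

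There is no substantive obstacle; the only points requiring care are keeping the separation inequality strict (this is exactly what makes the concluding sum strictly negative rather than merely nonpositive) and noting that this is precisely the step where the Euclidean structure is essential — the identity $\|v_1 + v_2 + v_3\|_2^2 \geq 0$, expanded through the $\ell_2$ inner product, has no analogue in a general metric, which is why the packing bound here is the sharp constant $2$.
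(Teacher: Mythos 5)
Your proof is correct. The paper's own argument is geometric: it passes to the plane through $c_1,c_2,c_3$, observes that all three points then lie in a disk of radius at most $r$ in that plane, and invokes the planar packing fact (extremized by the inscribed equilateral triangle) that among any three points in a disk of radius $r$ some pair is at distance at most $\sqrt{3}\,r$, contradicting the strict $\sqrt{3}\,r$-separation of the representatives produced by Algorithm~\ref{alg:RFiltering2}. You prove the same packing bound by a purely algebraic route: translating $f$ to the origin, turning $\|v_i-v_j\|_2^2 > 3r^2$ together with $\|v_i\|_2 \leq r$ into $\langle v_i,v_j\rangle < -\tfrac{1}{2}r^2$ for each pair, and contradicting $\|v_1+v_2+v_3\|_2^2 \geq 0$. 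The comparison: the paper's version is a one-line reduction to a classical two-dimensional statement that it leaves unproved (it is essentially quoting Lemma 1 of Nagarajan et al.), while yours is self-contained, works directly in $\mathbb{R}^d$ without the projection step, and handles the degenerate case $r=0$ explicitly. Your bookkeeping of strictness is also right where it matters: since the algorithm removes the ball $B_Z(c,\sqrt{3}\cdot r)$ when $c$ is selected, distinct representatives are at distance strictly greater than $\sqrt{3}\cdot r$, and that strict inequality is precisely what makes your concluding sum strictly negative rather than merely nonpositive.
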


The proof of the above lemma is very simple, and is based on the fact that  $d(c, c') > \sqrt{3} \cdot r$ for every $c \neq c' \in C_{\textrm{rep}}$, and the following observation. Let $f \in F$ be a facility, and suppose that there are at least three clients $c_1, c_2, c_3 \in C_{\text{rep}}$ within distance $r$ from $f$. We consider the plane that contains $c_1, c_2, c_3 \in C_{\text{rep}}$. Then, there must be a circle on that plane of radius $r$ that contains all three points $c_1, c_2, c_3$. This implies that at least one of the distances $d(c_1, c_2), d(c_1, c_3), d(c_2, c_3)$ is at most $\sqrt{3} \cdot r$, which is a contradiction.

As in~\cite{nagarajan2020euclidean}, we construct the following graph $G = (C_{\textrm{rep}}, E)$. Each edge in the graph represents some facility within distance $r$ to the edge's endpoints. More specifically, for each facility $f \in F$, we consider the set $\{c \in C_{\textrm{rep}}: d(f,c) \leq r\}$, and do the following, depending on the size of this set.
\begin{itemize}
    \item $\{c \in C_{\textrm{rep}}: d(f,c) \leq r\} = \emptyset$: In this case, no edge is added.
    \item $\{c \in C_{\textrm{rep}}: d(f,c) \leq r\} = \{c\}$: In this case, we add a self-loop $e_f = (c,c)$ on the vertex $c$ of $G$.
    \item $\{c \in C_{\textrm{rep}}: d(f,c) \leq r\} = \{c_1, c_2\}$: In this case, we add the edge $e_f = (c_1, c_2)$.
\end{itemize}
Due to Lemma~\ref{makeagraph}, there are no other cases to consider, and thus the resulting $G$  is a graph that might contain parallel edges and self-loops. We note that there is clear injection from the set $E$ of edges to the set $F$ of facilities, and from now on we will use the phrase ``facility $f$ corresponding to an edge $e_f$ of $G$". We also define $w(c) = |X_c|$ for every $c \in C_{\textrm{rep}}$.

Roughly speaking, Nagarajan et al.~\cite{nagarajan2020euclidean} show that if $r$ is a feasible radius for the non-robust version of the problem, a minimum edge cover of $G$ corresponds to a feasible solution of radius $(1+\sqrt 3)r$. However, in the robust setting choosing a feasible set of edges/facilities is not as straightforward. For that, we define the following variant of Edge Cover, which we call Max $k$-Edge Cover. From now on we use the following notation. Given a graph $G = (V, E)$ and a subset of edges $E' \subseteq E$, we write $V(E') \coloneqq \{v \in V: \exists e \in E' \textrm{ s.t. }v \in e\}$; we will say that the set of edges $E'$ covers the set of vertices $V(E')$.

\begin{definition}[Max $k$-Edge Cover]
Let $G = (V, E)$ be an undirected graph, where parallel edges and self-loops are allowed. Let $w: V \to \mathbb{R}_{\geq 0}$, $k \in \mathbb{N}$, and $p \in \mathbb{R}_{\geq 0}$. The goal is to select a set $S \subseteq E$ with $|S| \leq k$ so as to maximize the total weight of the vertices covered by $S$, i.e., the quantity $w(V(S)) \coloneqq \sum_{v \in V(S)} w(v)$.
\end{definition}

In the following two lemmas, we explain why an algorithm for Max $k$-Edge Cover suffices to design an algorithm for our problem.

\begin{lemma}\label{lemma:yes}
Let $G = (C_{\textrm{rep}}, E)$ be the graph, as constructed above using the candidate point $(x,y) \in \mathcal{P}(r)$ and the resulting sets $C_{\textrm{rep}}$ and $\{X_c\}_{c \in C_{\textrm{rep}}}$, and let $w(c) = |X_c|$ for every $c \in C_{\textrm{rep}}$ and $k,p \in \mathbb{N}$. If the resulting Max $k$-Edge Cover instance has a feasible solution of weight at least $p$ that we can compute in polynomial time, then in polynomial time we can construct a feasible solution for the Robust Euclidean $k$-Supplier instance of radius at most $(1 + \sqrt{3})r$.
\end{lemma}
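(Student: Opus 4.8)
The plan is to show that a Max $k$-Edge Cover solution of weight at least $p$ can be converted, facility-by-facility, into a set of at most $k$ open facilities that together cover at least $p$ clients within distance $(1+\sqrt{3})r$. First I would take the solution $S \subseteq E$ with $|S| \le k$ and $w(V(S)) \ge p$, and for each edge $e \in S$ select the corresponding facility $f_e \in F$ (recall the injection from $E$ to $F$); let $\hat{S} = \{f_e : e \in S\} \subseteq F$, so $|\hat{S}| \le |S| \le k$. The point of the construction of $G$ is that every edge $e$ (whether a genuine edge or a self-loop) has both endpoints — i.e.\ every vertex $c \in V(\{e\})$ — within distance $r$ of $f_e$. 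Hence for every $c \in V(S)$ there is a facility in $\hat{S}$ at distance at most $r$.

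Next I would bound the coverage radius. Fix $c \in V(S)$ and let $f_e \in \hat{S}$ witness $d(c, f_e) \le r$. For any client $u \in X_c$, by construction of the clustering in Algorithm~\ref{alg:RFiltering2} we have $u \in B_Z(c, \sqrt{3}\,r)$, i.e.\ $d(c,u) \le \sqrt{3}\,r$. The triangle inequality then gives $d(f_e, u) \le d(f_e, c) + d(c, u) \le r + \sqrt{3}\,r = (1+\sqrt{3})r$. Therefore $\hat{S}$ covers, within radius $(1+\sqrt{3})r$, every client in $\bigcup_{c \in V(S)} X_c$. Since the clusters $\{X_c\}_{c \in C_{\mathrm{rep}}}$ are pairwise disjoint and $w(c) = |X_c|$, the number of clients covered is exactly $\sum_{c \in V(S)} |X_c| = w(V(S)) \ge p$. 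Thus $\hat{S}$ is a feasible Robust Euclidean $k$-Supplier solution of radius $(1+\sqrt{3})r$, and it is computed in polynomial time from $S$ simply by dereferencing the $E \to F$ correspondence and taking the union of the associated clusters.

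The only genuinely substantive point — and the one I would state carefully rather than gloss over — is the guarantee that each edge's corresponding facility lies within distance $r$ of \emph{both} endpoints; but this is immediate from how edges were added to $G$ (a self-loop $e_f=(c,c)$ is added precisely when $\{c' \in C_{\mathrm{rep}} : d(f,c') \le r\} = \{c\}$, and an edge $e_f=(c_1,c_2)$ precisely when that set equals $\{c_1,c_2\}$), so no extra work is needed. Everything else is bookkeeping: disjointness of the clusters (explicitly noted after Algorithm~\ref{alg:RFiltering2}), the size bound $|\hat S| \le k$, and a single application of the triangle inequality. I do not anticipate any real obstacle; the lemma is the ``easy direction'' that motivates introducing Max $k$-Edge Cover, with the harder direction (separating hyperplane when the optimum is below $p$, and the polynomial-time algorithm for Max $k$-Edge Cover itself) deferred to the subsequent lemmas.
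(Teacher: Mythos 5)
Your proposal is correct and follows essentially the same argument as the paper: map each selected edge to its corresponding facility (giving at most $k$ facilities, each within distance $r$ of every covered representative), apply the triangle inequality with the cluster radius $\sqrt{3}\,r$, and use the pairwise disjointness of the clusters $\{X_c\}$ to count at least $p$ covered clients. No gaps; the minor phrasing ``exactly'' for the number of covered clients should be ``at least,'' but this does not affect the argument.
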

\begin{proof}
Suppose that we can efficiently compute a set $S \subseteq E$ such that
\begin{equation*}
    w(C_{\textrm{rep}}(S)) = \sum_{c \in C_{\textrm{rep}}(S)} w(c) = \sum_{c \in C_{\textrm{rep}}(S)} |X_c| \geq p.
\end{equation*}
Let $F_S \subseteq F$ be the set of facilities corresponding to the edges in $S$. We have $|F_S| = |S| \leq k$. Moreover, by construction, we have $C_{\textrm{rep}}(S) \subseteq B_X(F_S,r)$. By the triangle inequality, this implies that
\begin{equation*}
    \bigcup_{c \in C_{\textrm{rep}}(S)} X_c \subseteq B_X(F_S, (1 + \sqrt{3})r).
\end{equation*}
Since the sets $\{X_c\}_{c \in C_{\textrm{rep}}}$ are pairwise disjoint, we conclude that the set $F_S \subseteq F$ is a feasible set of facilities that gives a solution of radius at most $(1 + \sqrt{3})r$.
\end{proof}

\begin{lemma}\label{lemma:no}
Let $G = (C_{\textrm{rep}}, E)$ be the graph, as constructed above using the candidate point $(x,y) \in \mathcal{P}(r)$ and the resulting sets $C_{\textrm{rep}}$ and $\{X_c\}_{c \in C_{\textrm{rep}}}$, and let $w(c) = |X_c|$ for every $c \in C_{\textrm{rep}}$ and $k,p \in \mathbb{N}$. If the resulting Max $k$-Edge Cover instance has optimal value that is strictly smaller than $p$, then $(x,y) \notin \mathcal{P}_I(r)$ and the inequality $\sum_{c \in C_{\textrm{rep}}} |X_c| \cdot x'(c) \leq p - 1$ is valid separating hyperplane, separating $(x,y)$ from $\mathcal{P}_I(r)$.
\end{lemma}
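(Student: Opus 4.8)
The plan is to prove the contrapositive direction implicitly: I will show that if $(x,y) \in \mathcal{P}_I(r)$, then the Max $k$-Edge Cover instance has optimal value at least $p$; equivalently, I will directly verify that the linear inequality $\sum_{c \in C_{\textrm{rep}}} |X_c| \cdot x'(c) \leq p-1$ is satisfied by every integer point of $\mathcal{P}(r)$, hence by all of $\mathcal{P}_I(r)$, while being violated by the current fractional point $(x,y)$. The violation at $(x,y)$ is immediate from inequality~\eqref{eq:covering}: we have $\sum_{c \in C_{\textrm{rep}}} |X_c|\cdot x(c) \geq p > p-1$, so once validity is established, $(x,y)\notin \mathcal{P}_I(r)$ follows and the hyperplane separates. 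Its encoding length is clearly polynomial since the coefficients $|X_c|$ are integers bounded by $|X|$. So the entire content is validity over integer points.

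So let $(\bar x,\bar y) \in \mathcal{P}(r) \cap (\{0,1\}^X \times \{0,1\}^F)$ be an arbitrary integer solution. The key step is to extract from $(\bar x,\bar y)$ a feasible Max $k$-Edge Cover solution $S \subseteq E$ of weight at least $\sum_{c\in C_{\textrm{rep}}} |X_c|\cdot \bar x(c)$. Let $\bar F = \{f \in F : \bar y(f)=1\}$, so $|\bar F|\leq k$. For each representative $c \in C_{\textrm{rep}}$ with $\bar x(c)=1$, the covering constraint $\sum_{s\in B_F(c,r)}\bar y(s)\geq \bar x(c)=1$ guarantees some facility $f_c\in \bar F$ with $d(f_c,c)\leq r$; by definition of the graph $G$, this facility $f_c$ corresponds to an edge $e_{f_c}\in E$ that is incident to $c$ (either a self-loop at $c$ or an edge to some other representative within distance $r$ of $f_c$). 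Let $S = \{e_{f_c} : c \in C_{\textrm{rep}},\ \bar x(c)=1\}$. Then $S\subseteq E$, and since each $e_{f_c}$ maps injectively back to a distinct facility in $\bar F$, we get $|S|\leq |\bar F|\leq k$, so $S$ is feasible for Max $k$-Edge Cover. Moreover every representative $c$ with $\bar x(c)=1$ lies in $C_{\textrm{rep}}(S)$ (it is an endpoint of $e_{f_c}$), so $w(C_{\textrm{rep}}(S)) \geq \sum_{c:\, \bar x(c)=1}|X_c| = \sum_{c\in C_{\textrm{rep}}}|X_c|\cdot \bar x(c)$.

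To finish, I relate this to $p$. Since $(\bar x,\bar y)\in\mathcal{P}(r)$, the constraint $\sum_{u\in X}\bar x(u)\geq p$ holds; combining with the chain of inequalities in~\eqref{eq:covering} applied to the integer point $\bar x$ (the greedy property $\bar x(c)\geq \bar x(v)$ for $v\in X_c$ still holds since the clustering was built from $x$, not $\bar x$—\textbf{this is the one point to be careful about}) gives $p \leq \sum_{u\in X}\bar x(u) = \sum_{c\in C_{\textrm{rep}}}\sum_{u\in X_c}\bar x(u)\leq \sum_{c\in C_{\textrm{rep}}}|X_c|\cdot\bar x(c)$. Wait—the subtlety is that the clustering $\{X_c\}$ is only a subpartition of the support of the original $x$, so $\sum_{u\in X}\bar x(u)$ might count clients outside $\bigcup_c X_c$, which only helps the inequality's direction, and clients $u\in X_c$ with $x(u)>0$ but for which the greedy ordering was by the original $x$-values, not $\bar x$; nonetheless since $\bar x(u)\in\{0,1\}$ we trivially have $\sum_{u\in X_c}\bar x(u)\leq |X_c|$ regardless of any ordering, so the bound $\sum_{u\in\bigcup_c X_c}\bar x(u)\leq \sum_c |X_c|\cdot[\exists u\in X_c:\bar x(u)=1]$ is not quite what we need; instead I use the cruder $\sum_{u\in X}\bar x(u) \leq \sum_{c\in C_{\textrm{rep}}}|X_c|\cdot\bar x(c)$ which holds because... \textbf{this is precisely the main obstacle} and it forces using that the facility constraint links $\bar x$ on $X_c$ to $\bar x(c)$: actually the clean route is to note that the solution $S$ we built covers exactly those $X_c$ with $\bar x(c)=1$, and to argue any integer solution can be modified so that $\bar x(u)=1$ for $u\in X_c$ only if $\bar x(c)=1$, using that whenever some $u\in X_c$ is covered (so $\bar y$ opens a facility within $r$ of $u$, which is within $r+\sqrt3 r$ of $c$ but we need within $r$ of $c$)—here is where Euclidean geometry re-enters. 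I will therefore instead follow the round-or-cut bookkeeping of Chakrabarty--Negahbani directly: the validity of $\sum_c |X_c|\cdot x'(c)\leq p-1$ over $\mathcal{P}_I(r)$ is equivalent to: no integer solution of radius $\leq r$ simultaneously sets $\bar x(c)=1$ for a set of representatives whose $X_c$-weights sum to $\geq p$. But if such an integer solution existed, the $S$ constructed above would be a Max $k$-Edge Cover solution of weight $\geq p$, contradicting the hypothesis that the optimum is $< p$. Hence validity holds, completing the proof.
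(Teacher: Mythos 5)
Your final argument is correct and is essentially the paper's proof: both show that any integer point of $\mathcal{P}_I(r)$ with $\sum_{c \in C_{\textrm{rep}}} |X_c| \cdot \bar x(c) \geq p$ yields, via the edges corresponding to open facilities within distance $r$ of the representatives it covers, a Max $k$-Edge Cover solution of size at most $k$ and weight at least $p$, and both combine this with inequality~(\ref{eq:covering}) and the integrality of the coefficients and of $p$ to get a valid inequality separating $(x,y)$ from $\mathcal{P}_I(r)$. The detour in your last paragraph (trying to show that \emph{every} integer point of $\mathcal{P}(r)$ satisfies $\sum_{c \in C_{\textrm{rep}}} |X_c| \cdot \bar x(c) \geq p$) is indeed a dead end, but it is unnecessary, since the argument you settle on needs only the direction you already established.
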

\begin{proof}
We will show that if every subset $S \subseteq E$ of at most $k$ edges satisfies $w(C_{\textrm{rep}}(S)) < p$, then every point $(x',y') \in \mathcal{P}_I(r)$ satisfies the inequality $ \sum_{c \in C_{\textrm{rep}}} |X_c| \cdot x'(c) \leq p - 1$. Since the candidate point $(x,y)$ is in $\mathcal{P}(r)$, by~(\ref{eq:covering}) it satisfies $\sum_{c \in C_{\textrm{rep}}} |X_c| \cdot x(c) \geq p > p - 1$ , and this will immediately imply the lemma.

So, suppose that there exists a point $(x', y') \in \mathcal{P}_I(r)$ that satisfies $\sum_{c \in C_{\textrm{rep}}} |X_c| \cdot x'(c) > p - 1$. Since $(x', y')$ is a convex combination of integral points, and since the coefficients in the left-hand size of the inequality are all non-negative integer numbers, this means that there must exist an integral point $(\chi^Q,\chi^T) \in \mathcal{P}_I(r)$ that satisfies $\sum_{c \in C_{\textrm{rep}}} |X_c| \cdot \chi^Q(c) \geq p > p - 1$, where $Q \subseteq X$, $T \subseteq F$, $|T| \leq k$, $|Q| \geq p$ and $Q \subseteq B_X(T,r)$; here, $\chi^A$ is the indicator vector of a set $A$. Let $E_T$ be the set of edges corresponding to the facilities in $T$. We have $|E_T| \leq k$. It is easy to see that $C_{\textrm{rep}} \cap Q \subseteq C_{\textrm{rep}}(E_T)$. Thus, we get
\begin{equation*}
    \sum_{c \in C_{\textrm{rep}}(E_T)} w(c)= \sum_{c \in C_{\textrm{rep}}(E_T)} |X_c| \geq \sum_{c \in C_{\textrm{rep}}} |X_c| \cdot \chi^Q(c) \geq p,
\end{equation*}
which is a contradiction, since by assumption we have $w(C_{\textrm{rep}}(S)) < p$ for every set $S \subseteq E$ of size at most $k$. We conclude that every point $(x',y')$ of $\mathcal{P}_I(r)$ satisfies $\sum_{c \in C_{\textrm{rep}}} |X_c| \cdot x'(c) \leq p - 1$.
\end{proof}

Lemmas~\ref{lemma:yes} and~\ref{lemma:no} show that if we are able to solve the resulting Max $k$-Edge Cover instance, then at each ellipsoid iteration we will be able to either round or cut, and thus, we will get the desired algorithm. So, the only thing remaining to do is solve the resulting Max $k$-Edge Cover instance. The following theorem shows that we can indeed solve Max $k$-Edge Cover in polynomial time.

\begin{theorem}\label{thm:cec}
There exists an algorithm that solves Max $k$-Edge Cover in polynomial time.
\end{theorem}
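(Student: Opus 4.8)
The plan is to reduce Max $k$-Edge Cover to a maximum-weight matching computation in an auxiliary graph. The basic intuition: choosing a set $S$ of at most $k$ edges to cover a maximum-weight set of vertices is governed almost entirely by how many vertices get covered by the \emph{first} few edges — covering a new vertex with a fresh edge is always at least as good as spending an edge on already-covered vertices, and each edge covers at most $2$ new vertices. So we should think of the optimum as being built from a matching (edges each covering two new vertices) plus some single-vertex "pendant" edges. Concretely, I would first preprocess the graph: delete self-loops except that for each vertex $v$ that carries at least one self-loop we remember this fact (a self-loop is an edge that, if chosen, covers exactly the one vertex $v$ and nothing else); collapse parallel edges (parallel edges between the same pair $\{u,w\}$ are redundant for coverage, so keep one); and note $w\ge 0$ so there is never harm in covering more vertices, only in using more than $k$ edges.

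The core step is a structural claim: there is an optimal solution $S$ whose edge set, restricted to the edges that actually cover a \emph{new} vertex at the moment they are added, forms a disjoint union of a matching $M$ (each edge of $M$ covering two previously-uncovered vertices) together with at most a few "pendant" edges each covering exactly one new vertex; moreover any edge of $S$ that covers zero new vertices can be discarded without loss (this only helps the $|S|\le k$ constraint). Given this, $|S| = |M| + (\text{number of pendant edges})$ and the covered weight is $w(V(M)) + \sum w(\text{pendant vertices})$. I would then guess the budget split: for each value $j$ with $0\le j\le \min(k,\lfloor |V|/2\rfloor)$, ask for the maximum-weight set of $2j$ vertices coverable by a matching of exactly $j$ edges, and then greedily augment using the remaining $k-j$ edges, each of which can pick up at most one further vertex — and a vertex $v$ can be picked up by a single extra edge iff $v$ has an incident edge whose other endpoint is already covered, or $v$ has a self-loop, or $v$ has an incident edge to a currently-uncovered vertex that we are also willing to spend budget on. Because of this last subtlety the greedy augmentation is itself not completely trivial, so the cleanest route is to fold everything into one weighted matching instance.

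To do that in a single shot, I would build a graph $H$ on the vertex set $V$ plus, for each vertex $v$, an auxiliary "pendant partner" $v'$, and add: the original (simplified) edges of $G$ between vertices of $V$; a pendant edge $\{v,v'\}$ whenever $v$ carries a self-loop in $G$ \emph{or} $v$ is an endpoint of some original edge — weight all of this appropriately so that a matching in $H$ encodes a choice of at most $k$ edges of $G$ and the covered weight. A standard trick handles the cardinality constraint: pad $H$ with $|V|$ "dummy" vertices all joined to everything at weight $0$ so that any vertex can be left "uncovered'' for free, and impose $|S|\le k$ by only allowing $k$ of the real-or-pendant edges — this is enforced by making $H$ bipartite-like with $k$ "slot'' gadgets, or more simply by computing, for each $j\le k$, a maximum-weight matching of size exactly $j$ in the natural incidence-style graph (size-constrained max-weight matching is poly-time, e.g. via Lagrangian / parametric search on an added uniform edge weight, or directly by a min-cost flow formulation) and taking the best over $j$. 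Either way every ingredient — max-weight matching, size-constrained max-weight matching, min-cost flow — is classical and polynomial.

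The main obstacle I anticipate is \emph{not} the matching computation but pinning down the structural claim precisely: one must argue that an optimal $S$ can be normalized so that the "useful'' edges decompose into a matching plus single-vertex pendants with no wasted edges, handling self-loops, parallel edges, and the case where an edge's two endpoints are both already covered. Once that exchange argument is in place — essentially, repeatedly replace an edge covering $\le 1$ new vertices that conflicts with the matching structure by one that does not, never decreasing coverage and never increasing $|S|$ — the reduction and the correctness proof (an optimal Max $k$-Edge Cover solution of weight $W$ yields a matching-based object of weight $W$ in $H$, and conversely) are routine bookkeeping.
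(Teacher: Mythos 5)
Your proposal is correct and follows essentially the same route as the paper's proof: normalize an optimal cover into a matching plus pendant edges each covering exactly one new vertex (discarding edges that cover nothing new), encode the pendants via partner vertices, i.e., edges $\{v,\bar v\}$ of weight $w(v)$ (legitimate because after preprocessing every remaining vertex has an incident edge or self-loop), and finish with a cardinality-constrained maximum-weight matching computation. The only divergence is in how the budget $|S|\le k$ is enforced: the paper uses a single gadget (a second copy of $V$ plus a clique of $2k$ zero-weight vertices, yielding one Maximum Weight Perfect Matching instance), whereas among your suggested alternatives the iteration over exact matching sizes $j\le k$ works, but the offhand ``min-cost flow formulation'' would not apply directly since the auxiliary graph is non-bipartite.
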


The above theorem can be proved by using standard reductions that may have appeared before in the literature. More precisely, we reduce Max $k$-Edge Cover to the Maximum Weight Perfect Matching problem, which is solvable in polynomial time. For completeness, we give a proof of Theorem~\ref{thm:cec} in Section~\ref{sec:matching}.

We now put everything together and give a formal proof of Theorem~\ref{thm:k-supplier-either-or}.

\begin{proof}[Proof of Theorem~\ref{thm:k-supplier-either-or}]
We use the ellipsoid method to certify emptiness of the polytope $\mathcal{P}_I(r)$. Given a candidate point $(x,y) \in \mathbb{R}^X \times \mathbb{R}^F$, we first check if $(x,y) \in \mathcal{P}(r) \supseteq \mathcal{P}_I(r)$. If $(x,y) \notin \mathcal{P}(r)$, then we can separate by using a violated inequality in the inequality description of $\mathcal{P}(r)$. Otherwise, we have $(x,y) \in \mathcal{P}(r)$. In this case, we run Algorithm~\ref{alg:RFiltering2} and obtain a set $C_{\textrm{rep}} \subseteq X$ and a subpartition $\{X_c\}_{c \in C_{\textrm{rep}}}$ of $X$. Based on these sets, we construct the correposponding Max $k$-Edge Cover instance, and use the algorithm of Theorem~\ref{thm:cec} to optimally solve it. If we obtain a solution of weight at least $p$, then by using Lemma~\ref{lemma:yes} we construct a feasible solution of our Robust Euclidean $k$-Supplier instance of radius at most $(1 + \sqrt{3})r$, which we return. Otherwise, if the obtained solution has weight strictly smaller than $p$, then by using Lemma~\ref{lemma:no}, we certify that $(x,y) \notin \mathcal{P}_I(r)$, and moreover, we get a separating hyperplane. It is easy to see that all generated hyperplanes have polynomial encoding length and thus, by Theorem 6.4.9 in \cite{grotschel2012geometric}, we are guaranteed that in polynomially many steps we will either obtain a feasible solution of radius at most $(1 + \sqrt{3})r$ or certify that $\mathcal{P}_I(r) = \emptyset$, which implies that there is no solution of radius $r$.
\end{proof}

%%%%%%%%%%%%%%%%%%%%%%%%%%%%%%%%%%%%%%%%%%%
\subsection{Reducing Max \texorpdfstring{$k$}{k}-Edge Cover to Maximum Weight Perfect Matching}\label{sec:matching}

In this section, we discuss the proof of Theorem~\ref{thm:cec}. In order to design an algorithm that solves Max $k$-Edge Cover in polynomial time, we will reduce it to the Maximum Weight Perfect Matching problem, or, in short, MWPM, which admits polynomial-time algorithms. Our reduction has two steps. We first reduce Max $k$-Edge Cover to a Maximum Weight Matching problem with a cardinality constraint on the number of edges that are allowed to be picked, which we call Max Weight $k$-Matching, and then we reduce Max Weight $k$-Matching to MWPM. We note that both reductions use standard techiniques and ideas that may have appeared before in the literature; nevertheless, we were not able to find an explicit reference for these.

\paragraph{The first reduction.} We start with a Max $k$-Edge Cover instance $G=(V,E)$ with a weight function $w: V \to \mathbb{N}_{\geq 0}$, where  $k \in \mathbb{N}$. We first preprocess the instance. Since the given graph $G=(V, E)$ might contain parallel edges and self-loops, we do the following. For every pair of neighboring vertices $u$ and $v$, we remove all but one of the edges connecting them (we arbitrarily select which edges to remove). Moreover, for every vertex $u$ that has more than one self-loops, we again remove all but one of these self-loops, by arbitrarily choosing which ones to remove. Finally, we discard all the vertices that are isolated, i.e., all vertices whose degree is zero. It is easy to see that the resulting Max $k$-Edge Cover instance has the same optimal value as the original instance, and any solution of the new instance is feasible for the original one. By slightly abusing notation, let $G = (V, E)$ be the resulting instance. From now on, we work with this simplified Max $k$-Edge Cover instance, which does not contain parallel edges, contains at most one self-loop per vertex and every vertex has degree at least $1$.

Given the simplified Max $k$-Edge Cover instance $G=(V, E)$, we construct a graph $H_G = (U, L)$, where $U$ is the vertex set and $L$ is the edge set, as follows. We set $U = V \cup \overline{V}$, where $\overline{V}$ is a copy of $V$; the copy of a vertex $u \in V$ is denoted as $\bar{u} \in \bar{V}$. We also set $L = (E \setminus \mathrm{Loop}(E)) \cup \overline{E}$, where $\overline{E} \coloneqq \{(u, \bar{u}): \; u \in V\}$ and $\mathrm{Loop}(E) \subseteq E$ is the set of self-loops contained in $G$. Finally, we define a weight function $w': L \to \mathbb{R}_{\geq 0}$ on the edges of $L$ as follows:
\begin{itemize}
    \item $w'(u,v) = w(u) + w(v)$, for every $(u,v) \in E \setminus \mathrm{Loop}(E)$,
    \item $w'(u,\bar{u}) = w(u)$, for every $u \in V$.
\end{itemize}

We will now show that the Max $k$-Edge Cover instance has a feasible solution of weight at least $p$ if and only if the constructed Max Weight $k$-Matching instance has a matching consisting of at most $k$ edges whose weight is at least $p$.

\begin{lemma}\label{lemma:first-part-of-reduction}
Let $G = (V, E)$ and $w: V \to \mathbb{N}_{\geq 0}$ be a Max $k$-Edge Cover instance, where $k \in \mathbb{N}$, that does not contain any parallel edges, contains at most one self-loop per vertex and every vertex has degree at least $1$. Let $H_G=(U, L)$ and $w' : L \to \mathbb{R}_{\geq 0}$ be the corresponding Max Weight $k$-Matching instance, as described above. Then, the given Max $k$-Edge Cover instance has a feasible solution of weight at least $p \in \mathbb{R}_{\geq 0}$ if and only if the Max Weight $k$-Matching instance $H_G$ has a solution of weight at least $p$.
\end{lemma}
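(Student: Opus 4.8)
The plan is to establish the two directions of the equivalence by constructing explicit transformations between feasible solutions of the two instances, both preserving cardinality (at most $k$ edges) and total weight.

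\textbf{From Max $k$-Edge Cover to Max Weight $k$-Matching.} Suppose $S \subseteq E$ with $|S| \leq k$ covers a vertex set $V(S)$ of weight at least $p$. I would build a matching $M \subseteq L$ in $H_G$ as follows: process the covered vertices and for each maximal ``star'' or component structure induced by $S$, select one incident edge per covered vertex greedily so that every vertex of $V(S)$ is matched, while using at most $|S|$ edges in total. Concretely, for each $v \in V(S)$ that is covered by a self-loop in $S$, match $v$ to its copy $\bar v$ using the edge $(v,\bar v) \in \overline{E}$. For the remaining covered vertices, since each connected component of the subgraph $(V(S), S \setminus \mathrm{Loop}(S))$ of non-loop edges has at least as many edges as it needs to have a matching saturating all its vertices minus possibly one, pick a matching of $S \setminus \mathrm{Loop}(S)$ that saturates as many covered vertices as possible, and cover any leftover vertex $v$ by its copy-edge $(v,\bar v)$. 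The key point is that the total number of edges used is at most $|S| \leq k$ (each self-loop is replaced by one copy-edge, and the matching on the non-loop part uses at most $|S \setminus \mathrm{Loop}(S)|$ edges, with leftover vertices using copy-edges but only when a non-loop edge incident to them was ``wasted''), and by the definition of $w'$ — where $w'(u,v) = w(u)+w(v)$ and $w'(u,\bar u) = w(u)$ — the total weight of $M$ equals $\sum_{v \in V(S)} w(v) = w(V(S)) \geq p$, since every covered vertex contributes its weight exactly once.

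\textbf{From Max Weight $k$-Matching to Max $k$-Edge Cover.} Conversely, let $M \subseteq L$ be a matching in $H_G$ with $|M| \leq k$ and $w'(M) \geq p$. I would map each edge of $M$ back to an edge of $E$: an edge $(u,v) \in M$ with $u,v \in V$ maps to the corresponding (unique, after preprocessing) edge of $E$, and an edge $(u,\bar u) \in M$ maps to the self-loop at $u$ in $E$ (which exists: recall $(u,\bar u)\in\overline E$ is always present, but I only use it when it carries weight, and a vertex $u$ with no self-loop should just never be matched to $\bar u$ in an optimal $M$ — or, to be safe, one argues WLOG by noting that if $u$ has no self-loop, then $(u,\bar u)$ edges can be handled separately; cleaner is to define $\overline E$ only for vertices carrying self-loops, but following the paper's construction I take $\overline E = \{(u,\bar u): u\in V\}$ and simply discard copy-edges whose preimage self-loop is absent, losing nothing because such an edge has weight $w(u) = $ the weight of $u$ which is then covered via its guaranteed incident non-loop edge). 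Let $S \subseteq E$ be the resulting set; then $|S| \leq |M| \leq k$. Since $M$ is a matching, the vertices of $V$ touched by distinct edges of $M$ are distinct, so $w'(M) = \sum_{e \in M} w'(e)$ counts each vertex of $V(S)$ at most once, whence $w(V(S)) \geq w'(M) \geq p$ — here I use that for a non-loop matching edge $(u,v)$, both $u$ and $v$ become covered and contribute $w(u)+w(v) = w'(u,v)$, and for a copy-edge $(u,\bar u)$, the vertex $u$ becomes covered (by the self-loop) and contributes $w(u) = w'(u,\bar u)$.

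\textbf{Main obstacle.} The delicate step is the forward direction: translating an arbitrary edge cover $S$ into a matching without increasing the edge count. A naive choice of ``one incident edge per covered vertex'' could use more than $|S|$ edges or double-count. The clean way is to observe that in any connected graph component with vertex set $W$ and edge set $S_W$, one can always select a matching together with copy-edges to saturate all of $W$ using at most $|S_W|$ edges total: if $S_W$ has a matching saturating all of $W$, use it ($\leq |S_W|$ edges by a parity/counting argument, actually $|W|/2 \leq |S_W|$); otherwise one vertex $w_0$ is left, but then $|S_W| \geq |W| - 1 \geq (|W|/2) $ still, and covering $w_0$ by $(w_0,\bar w_0)$ adds one edge while we used at most $(|W|-1)/2$ matching edges, and $(|W|-1)/2 + 1 \leq |W|-1 \leq |S_W|$ for $|W|\geq 2$; the case $|W|=1$ forces a self-loop in $S_W$, handled directly. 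I expect this counting to be the only place requiring care; everything else is bookkeeping with the weight function $w'$.
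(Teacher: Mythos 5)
Your forward direction rests on the claim that each connected component of $(V(S), S\setminus \mathrm{Loop}(S))$ admits a matching saturating ``all its vertices minus possibly one.'' That claim is false: a star with three leaves is connected, yet every matching leaves two of its four vertices unmatched. Consequently your explicit counting, which only budgets for zero or one leftover vertex per component, does not cover all cases. The bound you want is nevertheless true, and the clean count is: in a component with vertex set $W$, at least one non-loop edge, edge set $S_W$, and a maximum matching of size $\nu \geq 1$, you use $\nu$ matching edges plus $|W| - 2\nu$ copy-edges, i.e.\ $|W| - \nu \leq |W| - 1 \leq |S_W|$ edges in total, since a connected graph on $|W|$ vertices has at least $|W|-1$ edges; summing over components, and spending one copy-edge per vertex covered only by a self-loop, gives at most $|S| \leq k$ edges. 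You should also ensure the matching you pick does not touch vertices you already matched to their copies on account of self-loops (for instance, first drop from $S$ any self-loop at a vertex that is also covered by a non-loop edge of $S$). For comparison, the paper sidesteps this component analysis entirely: it passes to an inclusion-minimal $S$ with $w(V(S))\geq p$, takes a maximal matching $M_0\subseteq S$, and uses minimality to argue that every leftover edge of $S$ contributes exactly one new vertex, which is then served by its copy-edge; both routes are viable, but yours needs the corrected counting above.

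In the converse direction, your primary map sends a matched copy-edge $(u,\bar u)$ to ``the self-loop at $u$,'' and you assert that an optimal $M$ would never match $u$ to $\bar u$ when $u$ has no self-loop. That assertion is false --- if the only neighbor of $u$ is matched elsewhere, $(u,\bar u)$ is the only way for $M$ to collect $w(u)$ --- and in any case the lemma quantifies over arbitrary matchings of weight at least $p$, not optimal ones. Your fallback of ``discarding'' such copy-edges is sound only if ``discard'' means ``replace by an arbitrary edge of $E$ incident to $u$ and put that edge into $S$''; if you literally drop the copy-edge and add nothing, $u$ need not lie in $V(S)$ and the inequality $w(V(S)) \geq w'(M)$ breaks. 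With the replacement reading, your cardinality and weight arguments go through (the $V$-endpoints of distinct matching edges are distinct, so no weight is double-counted), and this is exactly what the paper does, uniformly for every copy-edge in $M$ and using only that $G$ has no isolated vertices, so no case distinction on self-loops is needed.
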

\begin{proof}
Suppose that the given Max $k$-Edge Cover instance has a solution $S \subseteq E$ that satisfies $|S| \leq k$ and $w(V(S)) \geq p$. Without loss of generality, we assume that $S$ is a minimal set that satisfies $w(V(S)) \geq p$. It is easy to see that $S$ can be decomposed into a maximal matching $M_0 \subseteq S$ and a set of edges $M_1 = S \setminus M_0$, where every edge in $M_1$ has one endpoint in $V(M_0)$ and one endpoint in $V \setminus V(M_0)$, and moreover $|V(M_1) \setminus V(M_0)| = |M_1|$; this latter equality follows from the fact that $S$ is minimal. We now define $M_2 \coloneqq \{(u,\bar{u}) \in L: u \in V(M_1) \setminus V(M_0)\}$. It is not hard to see that $M_0 \cup M_2$ is a matching in $H_G$, and moreover $|M_0 \cup M_2| = |M_0| + |M_1| \leq k$. The above discussion now implies that
\begin{equation*}
    w'(M_0 \cup M_2) = w'(M_0) + w'(M_2) = w(V(M_0)) + w\left(V(M_1) \setminus V(M_0) \right) = w(V(S)) \geq p.
\end{equation*}
We conclude that $M_0 \cup M_2$ is a matching in $H_G$ consisting of at most $k$ edges whose weight is at least $p$.

For the converse, we assume that $H_G$ has a matching $M$ of size at most $k$ whose weight is at least $p$, and we will show that the Max $k$-Edge Cover instance has a solution of weight at least $p$. For that, we write $M = M_0 \cup M_1$, where $M_0 \subseteq E \setminus \mathrm{Loop}(E)$ and $M_1 \subseteq \overline{E}$. We now define $S \coloneqq M_0 \cup S'$, where $S' \subseteq E$ is any set of $|M_1|$ edges of $G$ that satisfies $\{u \in V: (u, \bar{u}) \in M_1\} \subseteq V(S')$; clearly such a set $S'$ exists, since $G$ has no isolated vertices. It is not hard to see now that $|S| \leq |M| \leq k$, and moreover, $w(V(S)) \geq w'(M_0) + w'(M_1) \geq p$. This concludes the proof.
\end{proof}

%%%%%%%%%%%%%%%%%%%%%%%%%%%%%%%%
\paragraph{The second reduction.} Here, we start with a Max Weight $k$-Matching instance and we will reduce it to a MWPM instance. To simplify notation, let $G = (V, E)$, $w: E \to \mathbb{R}_{\geq 0}$ be a given Max Weight $k$-Matching instance, where $k \in \mathbb{N}$. We construct the MWPM instance $(H = (U, L), w')$ where $w': L \to \mathbb{R}_{\geq 0}$ as follows. We set $U \coloneqq V \cup \overline{V} \cup K$, where $\overline{V}$ is a copy of $V$ and $K$ is a set of $2k$ new vertices. We also set $L \coloneqq E \cup \overline{E} \cup C_K \cup K_{\overline{V}}$, where
\begin{itemize}
    \item $\overline{E} = \{(v, \bar{v}): v \in V\}$, 
    \item $C_K = \{(c, c'): c \neq c' \in K\}$ and
    \item $K_{\overline{V}} = \{(\bar{v}, c): \bar{v}\in \overline{V}, c \in K\}$.
\end{itemize}
Finally, we define a weight function $w': L \rightarrow \mathbb{R}_{\geq 0}$ as follows:
\begin{itemize}
    \item $w'(e) = w(e)$ for every $e \in E$
    \item $w'(e) = 0$ for every $e \in L \setminus E$.
\end{itemize}

\begin{lemma}\label{lemma:second-part-of-reduction}
Let $G = (V, E)$, $w: E \to \mathbb{R}_{\geq 0}$ be a Max Weight $k$-Matching instance, where $k \in \mathbb{N}$, and let $H = (U, L)$ be the corresponding MWPM instance with $w' : L \to \mathbb{R}_{\geq 0}$, as defined above. Then, the given Max Weight $k$-Matching instance has a solution of weight at least $p \in \mathbb{R}_{\geq 0}$ if and only if the corresponding MWPM instance has a solution of weight at least $p$.
\end{lemma}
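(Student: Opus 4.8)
The plan is to prove the two directions of the equivalence separately, in both cases turning a Max Weight $k$-Matching object into a perfect matching of $H$ of the same weight (up to the zero-weight padding edges) and back. The key structural fact to exploit is that $H$ was built by (i) adding, for each $v \in V$, a partner $\bar v \in \overline V$ and the zero-weight edge $(v,\bar v)$, so that any vertex of $V$ left unmatched by a matching $M \subseteq E$ can be ``covered'' via its copy; and (ii) adding a clique $K$ on $2k$ fresh vertices together with all edges from $\overline V$ to $K$, so that the copies $\bar v$ that we did \emph{not} use to cover a $V$-vertex can be absorbed into $K$, with the leftover $K$-vertices matched among themselves inside the clique. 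Since $|K| = 2k$ is even and $K$ is complete, any even-size subset of $K$ can be perfectly matched internally; this is exactly what forces the cardinality bound $k$ to be respected.

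First I would prove the forward direction. Take a matching $M \subseteq E$ in $G$ with $|M| \le k$ and $w(M) \ge p$. Let $A = V(M)$ be the $2|M|$ vertices of $G$ touched by $M$, and let $B = V \setminus A$. Build a perfect matching $N$ of $H$ as follows: include $M$ itself; for every $v \in B$, include the edge $(v,\bar v) \in \overline E$; this leaves the copies $\{\bar v : v \in A\}$ and all of $K$ still to be matched. There are exactly $|A| = 2|M| \le 2k$ such copies, so match $2|M|$ of the vertices of $K$ to these copies via edges of $K_{\overline V}$, and match the remaining $2k - 2|M|$ vertices of $K$ — an even number — internally using edges of the clique $C_K$. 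Every vertex of $V$, of $\overline V$, and of $K$ is now matched exactly once, so $N$ is a perfect matching of $H$, and since all added edges have weight $0$ we get $w'(N) = w(M) \ge p$.

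For the converse, start from a perfect matching $N$ of $H$ with $w'(N) \ge p$ and set $M \coloneqq N \cap E$. Since weights are supported only on $E$, we have $w(M) = w'(N) \ge p$, and $M$ is a matching in $G$ (it is a subset of the matching $N$). The only thing to check is $|M| \le k$. Here is the counting argument: the $2k$ vertices of $K$ are each matched by $N$; each $K$-vertex is matched either to another $K$-vertex (an edge of $C_K$) or to some $\bar v \in \overline V$ (an edge of $K_{\overline V}$). Say $2a$ of the $K$-vertices are matched internally and the remaining $2k - 2a$ are matched to distinct copies $\bar v$. Those $2k - 2a$ copies $\bar v$ are therefore \emph{not} matched via their edge $(v,\bar v)$, so the corresponding original vertices $v$ must be matched by $N$ through an edge of $E$ — i.e. they lie in $V(M)$. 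Hence $|V(M)| \ge 2k - 2a$, giving $|M| = \tfrac12 |V(M)| \ge k - a$; but that is the wrong direction. The correct way to bound $|M|$ from above is to observe instead that each of the $|V| - |V(M)|$ vertices of $V$ not touched by $M$ must be matched to its copy $\bar v$ (that edge is forced, since $v$ has no other incident edge in $H$ outside $E$), consuming that many vertices of $\overline V$; the remaining $|V(M)|$ copies must be matched into $K$ via $K_{\overline V}$, which requires $|V(M)| \le |K| = 2k$, i.e. $|M| \le k$. So the obstacle — and the one place a little care is needed — is precisely this bookkeeping on which copies $\bar v$ are forced to be used: the edge $(v, \bar v)$ is the unique $H$-edge at $v$ outside $E$, so $v$ is matched either inside $G$ or to $\bar v$, and this dichotomy is what converts ``$N$ perfect'' into the cardinality constraint $|M| \le k$ via the fixed size of $K$. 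Everything else is routine, and combining the two directions proves the lemma.
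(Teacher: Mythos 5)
Your proof is correct and follows essentially the same route as the paper: the forward direction constructs the identical perfect matching (edges of $M$, copy-edges for $V \setminus V(M)$, leftover copies into $K$, and the even remainder of $K$ matched internally), and the converse bounds $|M| \le k$ by the same counting on $\overline{V}$ versus $K$, only phrased directly (the $2|M|$ copies of matched vertices are forced into $K$, so $2|M| \le 2k$) rather than by the paper's contradiction argument. The brief false-start count from the $K$-side is harmless since you recognize it gives the wrong inequality and replace it with the correct argument.
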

\begin{proof}
Suppose that $M \subseteq E$ is a matching that satisfies $|M| \leq k$ and $w(M) \geq p$. We now construct a perfect matching $M' \subseteq L$ of $H$ whose weight is equal to $w(M)$, as follows.
\begin{itemize}
    \item We add every edge of $M$.
    \item We match every vertex in $V \setminus V(M)$ with its copy in $\overline{V}$ (here, we again used the notation $V(M) \coloneqq \{v \in V: \exists e \in M \textrm{ s.t. }v \in e\}$).
    \item We match the remaining $2|M|$ vertices of $\overline{V}$ with any $2|M|$ vertices of $K$; since $|M| \leq k$, this is always feasible.
    \item We match the remaining $2k - 2|M|$ vertices of $K$ arbitrarily with each other, since $K$ forms a clique.
\end{itemize}
It is easy to see that the resulting set $M'$ is indeed a perfect matching in $H$, whose weight is equal to $w'(M') = w(M) \geq p$.

Conversely, suppose that $M'$ is a perfect matching in $H$ that satisfies $w'(M') \geq p$. Clearly, we have $w'(M') = w'(M' \cap E) = w(M' \cap E)$. Let $M \coloneqq M' \cap E$. The set $M$ is a matching in $G$ of weight at least $p$. Thus, the only thing remaining to prove is that $|M| \leq k$. For that, we will show that any perfect matching in $H$ must pick at most $k$ edges from the set $E$. Suppose otherwise, i.e., suppose that $|M| > k$. For this to happen, we must have $|V| > 2k$. We now observe that the vertices in $\overline{V}$ can only be matched with vertices in $V$ or $K$. Since there are strictly fewer than $|V| - 2k$ vertices of $V$ that are not matched via $M$, it means that in total, there are strictly fewer than $|V|$ vertices among $V$ and $K$ that can be matched with vertices in $\overline{V}$. Since $|\overline{V}| = |V|$, this implies that no perfect matching exists, which is a contradiction. We conclude that we must necessarily have $|M| \leq k$.
\end{proof}

%%%%%%%%%
\paragraph{The complete reduction.} We now put everything together and prove Theorem~\ref{thm:cec}.
\begin{proof}[Proof of Theorem~\ref{thm:cec}]
We start with a Max $k$-Edge Cover instance, and we preprocess it so that there are no parallel edges, there is at most one self-loop per vertex, and every vertex has degree at least 1. We now use Lemma~\ref{lemma:first-part-of-reduction} and obtain a Max Weight $k$-Matching instance in polynomial time. Applying Lemma~\ref{lemma:second-part-of-reduction}, we obtain our final MWPM instance. Due to the guarantees of the intermediate lemmas, an optimal solution for the MWPM instance translates into an optimal solution of the original Max $k$-Edge Cover instance. Thus, we can now use some of the well-known algorithms for solving MWPM in order to obtain an optimal solution for our Max $k$-Edge Cover instance. This concludes the proof.
\end{proof}

\paragraph{Acknowledgments.} All three authors would like to thank Rico Zenklusen for helpful discussions and for suggesting to look for the deterministic reduction of Section~\ref{sec:matching} (rather than a randomized one that was using exact weight matching machinery and which appeared in a preliminary draft of this work). The first author would also like to thank Georg Anegg and Adam Kurpisz for useful discussions in the early stages of this work.

\bibliographystyle{plain}
\bibliography{references}

\end{document}